\documentclass[a4paper]{article}
\pdfoutput=1
\usepackage{graphicx}
\usepackage{amsmath}
\usepackage{amsfonts}
\usepackage{amssymb}

\newtheorem{theorem}{Theorem}
\newtheorem{claim}{Claim}
\newtheorem{corollary}[claim]{Corollary}
\newtheorem{definition}{Definition}

\newtheorem{assumption}{Assumption}
\newtheorem{lemma}[claim]{Lemma}
\newtheorem{observation}{Observation}

\newenvironment{proof}[1][Proof]{\noindent\emph{#1.} }{\ $\Box$ \medskip}

\begin{document}

\title{Strong Stability of Nash Equilibria \\ in Load Balancing Games%
\footnote{Preprint. To appear in \emph{Science China Mathematics}, Science China Press and
Springer-Verlag Berlin Heidelberg}}
\author{Bo CHEN%
\footnote{Corresponding author: Centre for Discrete Mathematics and Its Applications,
Warwick Business School, University of Warwick, Coventry CV4 7AL, UK;
\texttt{b.chen@warwick.ac.uk}}
\\
University of Warwick, United Kingdom
\and Song-Song LI%
\footnote{School of Management, Qufu Normal University, Rizhao 276826, China;
\{\texttt{163lisongsong, yuzhongrz}\}\texttt{@163.com}}
\ \ \ Yu-Zhong ZHANG${}^{\ddagger}$ \\
Qufu Normal University, China
}

\date{6 November 2013}

\maketitle

\begin{abstract}
We study strong stability of Nash equilibria in load balancing
games of $m$ ($m \ge 2$) identical servers, in which every job
chooses one of the $m$ servers and each job wishes to minimize its
cost, given by the workload of the server it chooses.

A Nash equilibrium (NE) is a strategy profile that is resilient to
unilateral deviations. Finding an NE in such a game is simple.
However, an NE assignment is not stable against coordinated
deviations of several jobs, while a strong Nash equilibrium (SNE) is.
We study how well an NE approximates an SNE.

Given any job assignment in a load balancing game, the improvement
ratio (IR) of a deviation of a job is defined as the ratio between
the pre- and post-deviation costs. An NE is said to be a $\rho$-approximate
SNE ($\rho\geq1$) if there is no coalition of jobs such
that each job of the coalition will have an IR more than $\rho$ from
coordinated deviations of the coalition.

While it is already known that NEs are the same as SNEs in the $2$-server load balancing game,
we prove that, in the $m$-server load balancing game for any given $m\geq3$, any
NE is a $({5}/{4})$-approximate SNE, which together with the lower bound already established
in the literature yields a tight approximation bound. This closes the final gap
in the literature on the study of approximation of general NEs to SNEs in
load balancing games. To establish our upper bound, we make a novel use of a graph-theoretic tool.

\smallskip
\noindent
\textbf{Keywords:} load balancing game, Nash equilibrium, strong Nash equilibrium,
approximate strong Nash equilibrium
\end{abstract}

\section{Introduction}

In game theory, a fundamental notion is \emph{Nash equilibrium} (NE), which is
a state that is \emph{stable} against deviations of any individual participants (known as agents) of the game
in the sense that any such deviation will not bring about additional benefit to the deviator. Much stronger stability
is exhibited by a \emph{strong Nash equilibrium} (SNE), a notion introduced by Aumann \cite{Aum59},
at which no coalition of agents exists such that each member of the coalition can benefit
from coordinated deviations by the members of the coalition.

Evidentally selfish individual agents stand to benefit from cooperation and hence
SNEs are much more preferred to NEs for stability. However, SNEs do not necessarily exist \cite{AnFeMa07} and,
even if they do, they are much more difficult to identify and to compute \cite{FeTa09,Chen09}.
It is therefore very much desirable to have the advantages of both computational efficiency and
strong stability, which motivates our study in this paper. We establish that, for
general NE job assignments in load balancing games,
which exist and are easy to compute, their loss of strong stability
possessed by SNEs is at most 25\%.

In a load balancing game, there are $n$ selfish agents, each representing one of a set
$J=\{J_{1},\cdots,J_{n}\}$ of $n$ jobs. In the absence of a coordinating authority, each agent must choose
one of $m$ identical servers, $M=\{1, \ldots, m\}$, to assign his job to
in order to complete the job as soon as possible.
All jobs assigned to the same server will finish at the same time, which
is determined by the workload of the server, defined to be the total
processing time of the jobs assigned to the server.
Let job $J_{j}$ have a processing time
$p_{j}$ ($1\leq j\leq n$) and let $S_{i}$ denote the set of jobs
assigned to server $i$ ($1\leq i\leq m$). For convenience, we will use
``agent" and ``job'' interchangeably, and consider job processing
times also as their ``lengths''. The completion time $c_{j}$ of job
$J_{j}\in S_{i}$ is the \emph{workload} of its server: $L_{i}=\sum_{J_{j}
\in S_{i}}p_{j}$.

The notions of NE and SNE can be stated more specifically for the load balancing game.
A job assignment $S = (S_1, \ldots, S_m)$ is said to be an NE if no individual
job $J_j \in S_i$ can reduce its completion time $c_j$ by unilaterally migrating
from server $i$ to another server. A job assignment $S = (S_1, \ldots, S_m)$ is said to be an SNE if
no subset $\Gamma\subseteq J$ of jobs can each reduce their job completion times by forming a coalition
and making coordinated migrations from their own current servers.

NEs in the load balancing game have been widely studied (see, e.g.,
\cite{FiHo79,KoPa99,CzVo02,KoMaSp03,ChGu12})
with the main focus of quantifying their loss of global optimality in terms of the price of
anarchy, a term coined by Koutsoupias and Papadimitriou \cite{KoPa99}, as largely summarized
in \cite{Vocking07}. In this paper, we study NEs in load balancing games
from a different perspective by
quantifying their loss of strong stability.

We focus on pure NEs, those corresponding to deterministic job assignments in load
balancing games. While high-quality NEs are easily computed, identification of an SNE is strongly NP-hard \cite{Chen09}.
Given any job assignment in a load balancing game, the \emph{improvement
ratio} (IR) of a deviation of a job is defined as the ratio
between the pre- and post-deviation costs. An NE is said to be a
$\rho$-approximate SNE ($\rho\geq1$) (which is called $\rho$-SE in \cite{Albers09})
if there is no coalition of
jobs such that each job of the coalition will have an IR more than
$\rho$ from coordinated deviations of the coalition. Clearly, the
stability of NE improves with a decreasing value of $\rho$ and a
$1$-approximate SNE is in fact an SNE itself.

For the load balancing game of two servers, one can easily verify
that every NE is also an SNE \cite{AnFeMa07}. If there are three or four servers in
the game, then it is proved in \cite{FeTa09} and \cite{Chen09}, respectively,
that any NE assignment is a $(5/4)$-approximate SNE, and the bound is
tight. Furthermore, it is a $(2-{2}/(m+1))$-approximate SNE if the game has
$m$ servers for $m \ge 5$ \cite{FeTa09}.

We establish in this paper that, in the $m$-server load balancing game
($m\geq3$), any NE is a $(5/4)$-approximate SNE, which is tight and hence
closes the final gap in the literature on the study of NE approximation of SNE in load balancing games.
To establish our approximation bound, we make a novel use of a powerful graph-theoretic tool.

\section{Definitions and Preliminaries}

\subsection{A Lower Bound}
\label{sec:lower_bound}

We start with an example to help the reader get some intuition of the problem under consideration. The example also provides a lower bound of $5/4$ for any NE assignment to approximate SNE. The left panel of Fig.~\ref{fig:lower_bound} below shows an NE assignment of six jobs to three identical machines with job completions $5$, $5$ and $10$, respectively, for the three pairs of jobs. If the four jobs of lengths $2$ and $5$ form a coalition and make a coordinated deviation as shown in the figure, then in the resulting assignment, each of the four jobs in the coalition achieves an improvement ratio of $5/4$.

\begin{figure}[h!]
\centering
\includegraphics[scale=0.8,viewport=40mm 235mm 165mm 270mm,clip=true]{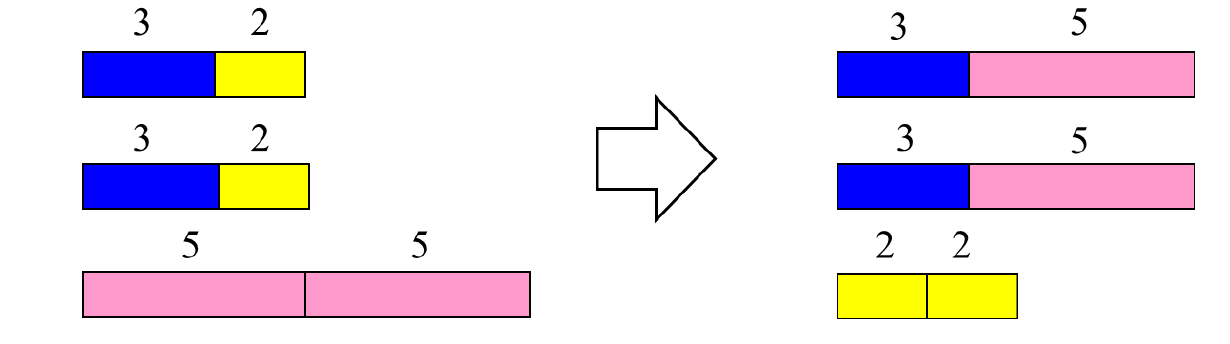} \\  %
  \caption{An Instance for Lower Bound}\label{fig:lower_bound}
\end{figure}

\subsection{Graph-theoretic Tool~\cite{Chen09}}

As a tool of our analysis, we start with the minimal deviation graph introduced by Chen \cite{Chen09}.
For convenience we collect into this subsection some basic results on minimal deviation graphs
from \cite{Chen09}. Given an NE job assignment $S=(S_1, \ldots, S_m)$, as an NE-based
coalitional deviation or simply \emph{coalitional deviation} $\Delta$, we refer to a
collective action of a subset $\Gamma \subseteq J$ of jobs in which each job of $\Gamma$ migrates from
its server in the assignment $S$ so that its completion time is decreased after the migration. Accordingly, $\Gamma = \Gamma(\Delta)$ is called the corresponding \emph{coalition}. We
introduce deviation graphs to characterize coalitional deviations.
In a coalitional deviation, a server $i$ is said to be \emph{participating}
or \emph{involved} if its job set changes after the deviation. Given a
coalitional deviation $\Delta$ with the corresponding coalition
$\Gamma=\Gamma(\Delta)$, we define the corresponding (directed) deviation graph
$G(\Delta)=(V,A)$ as follows:
\begin{eqnarray*}
  V = V(G)&:=&\{i:\ \textrm{server $i$ is a participating server}\}; \\
  A = A(G)&:=&\{(u,v):\ \textrm{a job $J_{j}\in\Gamma$
migrates from $S_{u}$ to $S_{v}$}\}.
\end{eqnarray*}
In what follows, without loss of generality we consider coalitional deviations with
$V(G)=M$. Given a coalitional deviation $\Delta$, we denote by $L'_{i}=L_{i}(\Delta)$
the workload of server $i$ after deviation $\Delta$, and by IR($\Delta$) the minimum of the
improvement ratios of all jobs taking part in $\Delta$.
Then we have the following definition and lemmas from~\cite{Chen09}:

\begin{lemma}\label{lem:out-degree}
The out-degree $\delta^{+}(i)$ of any node $i$
of a deviation graph is at least 1, and hence $|S_{i}|\geq2$.
\end{lemma}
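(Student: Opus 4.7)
The plan is to establish both assertions by contradiction, using the defining property of a Nash equilibrium, namely that no single job can reduce its completion time by a unilateral move.

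For the first part, I would assume toward contradiction that some participating server $i$ has $\delta^{+}(i)=0$, i.e., no job departs from $i$ under $\Delta$. Since $i$ is participating, its job set must change, so at least one job, say $J_{j}$, must migrate into $i$ from some other server $u$. I would then combine two inequalities. On the one hand, since $S$ is an NE and $J_{j}\in S_{u}$ could deviate unilaterally to server $i$, the NE property at $J_{j}$ gives $L_{u}\leq L_{i}+p_{j}$. On the other hand, because no job leaves $i$ under $\Delta$ while at least $J_{j}$ arrives, the post-deviation workload satisfies $L'_{i}\geq L_{i}+p_{j}$. But $J_{j}$ only joins the coalition if its completion time strictly decreases, so $L'_{i}<L_{u}$. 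Chaining these three inequalities yields $L_{i}+p_{j}\leq L'_{i}<L_{u}\leq L_{i}+p_{j}$, the desired contradiction.

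For the second part, I would use the first part: $\delta^{+}(i)\geq 1$ guarantees some job $J_{j}\in S_{i}$ migrates out of $i$, to some server $v$. Suppose $|S_{i}|=1$, so $S_{i}=\{J_{j}\}$ and $L_{i}=p_{j}$. The improvement condition for $J_{j}$ requires $L'_{v}<L_{i}=p_{j}$, yet $J_{j}$ itself is placed at $v$ after the deviation, so trivially $L'_{v}\geq p_{j}$, a contradiction. Hence $|S_{i}|\geq 2$.

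The only subtle point — and what I would expect to be the sole place requiring care — is the bookkeeping of which workload quantities are pre- versus post-deviation and the careful invocation of the NE condition at the correct job. Both inequalities $L_{u}\leq L_{i}+p_{j}$ (from NE) and $L'_{i}\geq L_{i}+p_{j}$ (from $\delta^{+}(i)=0$) must refer to the same $p_{j}$, so the argument really hinges on pinning down one specific incoming job $J_{j}$ and tracking it through both the NE hypothesis on $S$ and the deviation accounting for $\Delta$. Once that is set up, the two parts fall out in one or two lines each.
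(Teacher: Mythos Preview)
Your proof is correct. Note, however, that the paper does not actually prove this lemma: it is imported verbatim from \cite{Chen09} along with the other basic facts about deviation graphs, so there is no in-paper argument to compare against. Your contradiction argument (using the NE inequality $L_{u}\le L_{i}+p_{j}$ together with the accounting $L'_{i}\ge L_{i}+p_{j}$ when nothing leaves $i$) is exactly the standard one and would be what any reader reconstructs; the second part is immediate once the first is in hand, just as you wrote.
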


\begin{lemma}\label{lem:no-cycles}
If all $m$ servers are involved in a coalitional deviation, then the deviation graph does not contain a set of node-disjoint directed cycles such that each node of the graph is in one of the directed cycles.
\end{lemma}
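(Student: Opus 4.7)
The plan is to argue by contradiction using a workload-conservation identity. Suppose toward contradiction that $G(\Delta)$ can be decomposed into node-disjoint directed cycles $C_1,\ldots,C_k$ whose union contains every node of $V(G)=M$. Each arc $(u,v)$ appearing in this cycle cover is in $A(G)$ because at least one job $J_j\in\Gamma$ has migrated from $S_u$ to $S_v$. Since $J_j$ belongs to the coalition, its completion time must strictly decrease after the deviation, so its pre-deviation cost $L_u$ exceeds its post-deviation cost $L'_v$, giving the strict inequality $L'_v<L_u$.

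The next step is to exploit these arc-level inequalities along each cycle. For a cycle $C=(i_1,i_2,\ldots,i_\ell,i_1)$ of the cover, summing $L'_{i_{k+1}}<L_{i_k}$ over $k=1,\ldots,\ell$ with indices taken mod $\ell$, and observing that both the left-hand side and right-hand side range over exactly the node set $V(C)$, yields
$$\sum_{i\in V(C)} L'_i \;<\; \sum_{i\in V(C)} L_i.$$
Because the cycles $C_1,\ldots,C_k$ are node-disjoint and together cover all of $M$, adding these $k$ strict inequalities gives the global strict inequality
$$\sum_{i=1}^{m} L'_i \;<\; \sum_{i=1}^{m} L_i.$$

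This is the desired contradiction, since both sides are equal to the total processing time $\sum_{j=1}^{n} p_j$, a quantity that is clearly invariant under any reassignment of jobs to servers. Hence no spanning collection of node-disjoint directed cycles can exist in $G(\Delta)$.

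The argument is essentially a conservation-of-workload observation, so I do not anticipate a serious technical obstacle. The only point that needs a little care is that each arc of the cycle cover really contributes a \emph{strict} inequality, which is guaranteed by the definition of a coalitional deviation (every job of $\Gamma$ strictly improves). The node-disjoint, spanning nature of the hypothetical cycle cover then makes the cycle-wise inequalities aggregate to the global inequality without any double counting, closing the contradiction cleanly.
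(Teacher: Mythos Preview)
Your argument is correct. The workload-conservation identity $\sum_{i=1}^{m}L_i=\sum_{i=1}^{m}L'_i=\sum_{j=1}^{n}p_j$ combined with the strict arc inequalities $L'_v<L_u$ summed along a spanning cycle cover yields the contradiction exactly as you describe; the node-disjointness and spanning hypotheses guarantee that the cycle-wise sums aggregate without overlap to the global sums.

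As for comparison with the paper: this lemma is one of the results the paper imports from~\cite{Chen09} without reproducing a proof, so there is no in-paper argument to compare against. Your proof is the natural one and is almost certainly what the cited reference contains; in any case it is complete and self-contained.
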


\begin{definition}\label{def:min-deviation}
Let $\Delta$ be a coalitional deviation and $\Gamma = \Gamma(\Delta)$ be the corresponding coalition. Deviation graph $G=G(\Delta)$ is said to be \emph{minimal} if $\textrm{IR}(\Delta')<\textrm{IR}(\Delta)$ for
any coalitional deviation $\Delta'$ such that the corresponding coalition $\Gamma'=\Gamma'(\Delta')$ is a proper subset
of $\Gamma$.
\end{definition}

\begin{lemma}\label{lem:in-degree}
The in-degree $\delta^{-}(i)$ of any node $i$ of a
minimal deviation graph is at least 1.
\end{lemma}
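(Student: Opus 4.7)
The plan is to argue by contradiction, so suppose that $G=G(\Delta)$ is minimal and yet some node $i$ has $\delta^{-}(i)=0$. Then no job enters server $i$ under $\Delta$, while by Lemma~\ref{lem:out-degree} the set $K\subseteq\Gamma$ of jobs migrating out of $i$ satisfies $|K|\geq 1$. My idea is to prune $\Gamma$ precisely at $i$: define $\Gamma':=\Gamma\setminus K$ and let $\Delta'$ be the deviation in which every remaining job performs the same migration as in $\Delta$; I will show that $\Delta'$ is a valid coalitional deviation on a proper subset of $\Gamma$ with $\textrm{IR}(\Delta')\geq\textrm{IR}(\Delta)$, contradicting Definition~\ref{def:min-deviation}.

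First I would check that $\Gamma'$ is a nonempty proper subcoalition. Properness is immediate since $K\neq\emptyset$. Nonemptiness uses Lemma~\ref{lem:out-degree} again: any other participating server $u\neq i$ emits at least one job $J_j$, and because $\delta^{-}(i)=0$ this $J_j$ lands on some $v\neq i$, so $J_j\in\Gamma'$. Next I would compare, server by server, the post-deviation workloads $L_v(\Delta')$ with $L'_v=L_v(\Delta)$. No migration in $\Delta'$ touches $i$, so $L_i(\Delta')=L_i$. For every $v\neq i$ the jobs leaving $v$ under $\Delta'$ are exactly those leaving $v$ under $\Delta$, while the jobs entering $v$ under $\Delta'$ are obtained from those entering $v$ under $\Delta$ by discarding the ones coming from $i$; hence $L_v(\Delta')\leq L'_v$.

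With these inequalities in hand, validity and the IR comparison both fall out easily. For any $J_j\in\Gamma'$ migrating from $u$ to $v$ (necessarily $u,v\neq i$) its post-deviation cost under $\Delta'$ is at most $L'_v<L_u$, so $J_j$ still strictly benefits and $\Delta'$ is indeed a coalitional deviation. Moreover the improvement ratio of $J_j$ under $\Delta'$ is $L_u/L_v(\Delta')\geq L_u/L'_v$, i.e.\ at least its ratio under $\Delta$, so taking minima over $\Gamma'$ gives $\textrm{IR}(\Delta')\geq\textrm{IR}(\Delta)$, which is the desired contradiction.

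I do not expect a serious obstacle here; the only conceptual point is the structural observation that, because $i$ absorbs no incoming traffic under $\Delta$, removing its outgoing jobs from the coalition lightens the workloads of every other participating server while leaving $i$ untouched, so every remaining deviator is at least as well off after the pruning as before. The small bookkeeping item to get right is ensuring $\Gamma'\neq\emptyset$, which is exactly where the hypothesis $V(G)=M$ and Lemma~\ref{lem:out-degree} applied to a second server are used.
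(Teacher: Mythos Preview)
The paper does not supply its own proof of this lemma; it is imported verbatim from~\cite{Chen09} together with Lemmas~\ref{lem:out-degree}, \ref{lem:no-cycles}, and \ref{lem:strong-connectivity} as preliminary results, so there is no in-paper argument to compare against.

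Your proof is correct and is the natural one. The pruning idea---drop the jobs leaving $i$ and keep every other migration unchanged---works precisely because $\delta^-(i)=0$ means server $i$ contributes only outgoing traffic, so removing that traffic can only lower (or leave unchanged) every other server's post-deviation load. Your verification that each $J_j\in\Gamma'$ still strictly benefits, and that its individual improvement ratio weakly increases, is exactly what is needed to obtain $\textrm{IR}(\Delta')\geq\textrm{IR}(\Delta)$ and hence contradict Definition~\ref{def:min-deviation}. The nonemptiness of $\Gamma'$ is handled as you say; note that even without the blanket assumption $V(G)=M$, any deviation graph has at least two nodes (a migrating job has distinct source and target), so a second participating server $u\neq i$ always exists and Lemma~\ref{lem:out-degree} applies to it.
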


\begin{lemma}\label{lem:strong-connectivity}
A minimal deviation graph is strongly connected.
\end{lemma}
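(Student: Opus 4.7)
The plan is to argue by contradiction: assume $G=G(\Delta)$ is not strongly connected, and then build a sub-deviation $\Delta'$ whose coalition $\Gamma'$ is a proper subset of $\Gamma$ and whose improvement ratio is at least $\textrm{IR}(\Delta)$, contradicting Definition~\ref{def:min-deviation}.

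The candidate $\Delta'$ comes from the condensation of $G$, which is a non-empty DAG: pick a \emph{sink} strongly connected component $C$ of it, i.e., one from which no arc of $G$ leaves, and let $\Gamma'\subseteq\Gamma$ be the set of jobs whose migration in $\Delta$ takes place entirely inside $C$. A singleton sink $C=\{c\}$ would force $c$ to have out-degree $0$ in $G$ (self-loops do not correspond to job migrations), contradicting Lemma~\ref{lem:out-degree}; hence $|C|\geq 2$ and $\Gamma'\neq\emptyset$. Also, since $G$ is assumed not strongly connected, $V\setminus C\neq\emptyset$, and any $u\in V\setminus C$ has out-degree $\geq 1$ by Lemma~\ref{lem:out-degree}, contributing a job of $\Gamma$ that is not in $\Gamma'$; so $\Gamma'\subsetneq\Gamma$ as required.

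The heart of the argument is the load comparison on servers in $C$. Because $C$ is a sink SCC, every job that leaves a server $v\in C$ in $\Delta$ lands inside $C$ and is retained in $\Delta'$, whereas the set of jobs arriving at $v$ under $\Delta'$ is merely a subset of those arriving under $\Delta$, since migrations into $C$ from outside are dropped. Therefore $L'_v(\Delta')\leq L'_v(\Delta)$ for every $v\in C$. Consequently, for each $J_j\in\Gamma'$ moving from $u\in C$ to $v\in C$, its new completion time $L'_v(\Delta')$ is bounded by $L'_v(\Delta)<L_u$, so $\Delta'$ is a genuine coalitional deviation, and the improvement ratio $L_u/L'_v(\Delta')$ of $J_j$ under $\Delta'$ is at least its improvement ratio $L_u/L'_v(\Delta)$ under $\Delta$. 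Taking the minimum over $\Gamma'$ yields $\textrm{IR}(\Delta')\geq\textrm{IR}(\Delta)$, the desired contradiction.

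The main subtlety to get right is that the argument must use a sink SCC, not a source SCC: if $C$ were a source, removing migrations that leave $C$ would tend to \emph{increase} the loads of servers in $C$, and the restricted migrations could then fail to be cost-decreasing, so $\Delta'$ might not even be a coalitional deviation. Beyond this asymmetry, the proof is essentially careful bookkeeping around the restriction operation, and I expect no further obstacle.
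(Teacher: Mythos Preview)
Your proof is correct. Note, however, that the present paper does not actually prove Lemma~\ref{lem:strong-connectivity}: it is listed, together with Lemmas~\ref{lem:out-degree}--\ref{lem:in-degree}, among the ``basic results on minimal deviation graphs from~\cite{Chen09}'' and is imported without argument. So there is no in-paper proof to compare against.

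That said, your argument is the natural one and all the steps check out. Restricting $\Delta$ to a \emph{sink} strongly connected component $C$ is exactly the right move: since $C$ is a sink, the set of jobs leaving each $v\in C$ is unchanged when passing from $\Delta$ to $\Delta'$, while the set of jobs arriving at $v$ can only shrink (the migrations into $C$ from outside are dropped), giving $L'_v(\Delta')\le L'_v(\Delta)$ and hence $\textrm{IR}(\Delta')\ge \textrm{IR}(\Delta)$. Your handling of the boundary issues is also correct: the absence of self-loops plus Lemma~\ref{lem:out-degree} forces $|C|\ge 2$ and $\Gamma'\neq\emptyset$, and the existence of a node in $V\setminus C$ with positive out-degree yields $\Gamma'\subsetneq\Gamma$. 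Your closing remark on why a source SCC would not work is apt and shows you have identified the one genuine asymmetry in the construction.
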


\subsection{Some Observations}

In our study of bounding NE approximation of SNE, we can apparently focus on those
coalitional deviations that correspond to minimal deviation graphs.
We start with several observations on any NE-based coalitional deviation
$\Delta$ involving $m$ servers for $m\geq3$. Let $G(\Delta)$ denote the corresponding
minimal deviation graph.

If two jobs assigned to server $i \in M$ in the NE
assignment migrate to server $j \in M$ ($j\neq i$) together, or
both stay on the server, then we can treat them as one single job without
loss of generality in our study of the minimal deviation graph. With this
understanding, if we let
$a_{i}$ ($i \in M$) denote the number of jobs assigned to server
$i$ in the NE assignment, then the following is immediate.
\begin{observation}\label{obs:out-degrees}
For any $i \in M$, we have $2\leq a_{i}\leq m$. $\delta^{+}(i)=a_{i}$ or
$\delta^{+}(i)=a_{i}-1$.
\end{observation}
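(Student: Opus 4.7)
The plan is to verify the three assertions in turn---the upper bound $a_i \leq m$, the dichotomy $\delta^+(i) \in \{a_i - 1,\, a_i\}$, and the lower bound $a_i \geq 2$---each by a short direct argument resting on the merging convention together with Lemma~\ref{lem:out-degree}.

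For $a_i \leq m$ and the out-degree dichotomy, I would appeal directly to the merging convention: after amalgamating any two jobs of $S_i$ that share a common post-deviation destination (including the common choice of staying at $i$), the $a_i$ effective jobs on server $i$ have pairwise distinct destinations chosen from the $m$ servers, which immediately gives $a_i \leq m$. Moreover, the ``stay at $i$'' destination can host at most one effective job, so either some effective job stays (yielding $\delta^+(i) = a_i - 1$) or none does (yielding $\delta^+(i) = a_i$).

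The main content is the lower bound $a_i \geq 2$, which I would prove by contradiction. Suppose $a_i = 1$, so after the merging server $i$ hosts a single effective job $J$ of the NE assignment. Lemma~\ref{lem:out-degree} gives $\delta^+(i) \geq 1$, and hence by the dichotomy just obtained $J$ cannot be a stayer; it must migrate to some server $j \neq i$. Since all original jobs of $S_i$ were merged into $J$, its processing time equals the full NE workload, $p_J = L_i$. After the deviation $J$ sits on server $j$, so $L'_j \geq p_J = L_i$, and the completion time of $J$ does not decrease---contradicting the requirement that every job in a coalitional deviation strictly improve its completion time. I do not anticipate a real obstacle beyond this single observation, that a fully merged job's own size already fills its destination to at least $L_i$; the rest is bookkeeping on top of the merging convention.
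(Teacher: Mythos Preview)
Your proposal is correct and matches the paper's reasoning; the paper in fact states no proof at all, declaring the observation ``immediate'' from the merging convention together with Lemma~\ref{lem:out-degree}. Your write-up simply unpacks that word: the upper bound and the out-degree dichotomy are exactly the content of the merging convention, and your contradiction argument for $a_i\ge 2$ is essentially the argument behind the clause ``and hence $|S_i|\ge 2$'' in Lemma~\ref{lem:out-degree}, applied to the merged instance (which is itself a valid load-balancing game with an NE and a coalitional deviation, so the lemma transfers verbatim). You could therefore shorten the last part to a direct citation of that clause of Lemma~\ref{lem:out-degree}, but spelling it out does no harm.
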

As a result of the above observation, the node set $M$ can be partitioned into
two, $M'$ and $M''$, as follows:
\begin{eqnarray*}
  M' &:=& \{i\in M:\ a_i=\delta^+(i)\}, \\
  M'' &:=& M\backslash M'=\{i\in M:\ a_i=\delta^+(i)+1\}.
\end{eqnarray*}
By applying a data scaling if necessary, we assume without loss of generality that
\begin{equation}\label{eq:scalling}
    \min_{i \in M}L_{i}=1.
\end{equation}

\begin{observation}\label{obs:bound-of-Li}
For any $i\in M$, we have $L_{i}\leq {a_{i}}/({a_{i}-1})$.
\end{observation}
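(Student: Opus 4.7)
The plan is to combine the Nash-equilibrium property at server $i$ with a simple averaging argument over the jobs sitting on $i$. The same quantity---namely the length of the shortest job in $S_i$---will be bounded from below using the NE condition and from above using averaging, and chaining these two inequalities will pin $L_i$ down.

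By the normalization \eqref{eq:scalling}, fix a server $i^{*}$ with $L_{i^{*}}=1$. If $L_i=1$ the desired inequality reduces to $1\le a_i/(a_i-1)$, which is immediate from $a_i\ge 2$ (Observation~\ref{obs:out-degrees}); so I may assume $L_i>1$, and in particular $i\ne i^{*}$. Let $p^{*}=\min\{p_j:J_j\in S_i\}$ denote the smallest processing time among the \emph{original} (pre-merging) NE jobs on server~$i$. Because $S$ is a Nash equilibrium, the job achieving $p^{*}$ cannot strictly reduce its completion time by unilaterally migrating to $i^{*}$, hence $L_{i^{*}}+p^{*}\ge L_i$, i.e.\ $p^{*}\ge L_i-1$.

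For the matching upper bound on $p^{*}$, recall that the merging convention introduced just before Observation~\ref{obs:out-degrees} groups the jobs of $S_i$ into exactly $a_i$ non-empty classes, one per distinct destination (counting ``stay on $i$'' as a destination if applicable). Hence $|S_i|\ge a_i$, and since $\sum_{J_j\in S_i}p_j=L_i$, the minimum is bounded by the average: $p^{*}\le L_i/|S_i|\le L_i/a_i$. Combining this with $p^{*}\ge L_i-1$ gives $L_i-1\le L_i/a_i$, which rearranges to the claimed $L_i\le a_i/(a_i-1)$.

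I do not foresee any substantive obstacle. The only care required is to apply the NE inequality to a genuine individual job rather than to a merged pseudo-job (a merged entity need not correspond to a single agent in the original game and so may not satisfy the NE inequality on its own), and to observe that merging cannot increase the number of groups, so $|S_i|\ge a_i$ is automatic. Everything else reduces to a one-line algebraic manipulation.
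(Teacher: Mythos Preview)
Your proof is correct and follows essentially the same route as the paper's: both bound the length of the shortest job on server~$i$ from above by the average $L_i/a_i$ and from below by $L_i-1$ via the NE condition at the minimum-load server, then combine. The only cosmetic differences are that the paper phrases the argument by contradiction and is less explicit than you are about distinguishing original jobs from merged pseudo-jobs; your extra care on that point is well placed but does not change the underlying argument.
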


\begin{proof}
Suppose to the contrary that $L_{i}>{a_{i}}/({a_{i}-1})$, which
implies that $a_{i}>{L_{i}}/({L_{i}-1})$.

Let $x_{i}$ denote the length of the shortest job assigned to server
$i$ in the NE assignment. We have $L_{i}\geq a_{i}x_{i}$, which leads to $L_{i}>{L_{i}x_{i}}/({L_{i}-1})$, that is,
$L_{i}>x_{i}+1$, which implies that the shortest job assigned to
server $i$ in the NE assignment can have the benefit of reducing its
job completion time by unilaterally migrating to the server of which
the workload is 1, contradicting the NE property.
\end{proof}

The following observation states that, if all jobs on a server participate in the migration,
then none of the servers they migrate to will have \emph{all}
its jobs migrate out.

\begin{observation}\label{obs:out-degree-relation}
If $(i,j)\in A$ and $i\in M'$, then $j\in M''$.
\end{observation}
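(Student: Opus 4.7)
The plan is to argue by contradiction: I would assume $(i,j)\in A$ with $i\in M'$ and, for contradiction, also $j\in M'$. Let $J$ of length $p$ be the job migrating from $i$ to $j$ in the deviation. The improvement condition gives $L'_j<L_i$; since $J$ lands on $j$, $p\le L'_j$; the NE condition for $J$ against a unilateral move to $j$ gives $L_j+p\ge L_i$; and the NE condition for $J$ against a unilateral move to the minimum-load server (of load $1$ by \eqref{eq:scalling}) gives $p\ge L_i-1$. Hence $L'_j\in[L_i-1,L_i)$, while Observation~\ref{obs:bound-of-Li} supplies $L_i\le a_i/(a_i-1)$.

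Next I would leverage $j\in M'$: every $J'\in S_j$ of length $q$ migrates to some destination $t$ with $L'_t<L_j$, and the NE against $J'$ unilaterally moving to $i$ yields $q\ge L_j-L_i$. Since $j$ has no stayer, $L'_j$ is exactly the total length of arrivals at $j$, each arriving job of length $q^{\mathrm{in}}$ from some source $u$ satisfying $L_u>L'_j$ (improvement) together with $L_j+q^{\mathrm{in}}\ge L_u$ (NE). Summing the NE over arrivals links $L'_j$ and $L_j$ quantitatively via $L'_j\ge\sum_u L_u-\delta^-(j)L_j$, while the improvement gives $\sum_u L_u>\delta^-(j)L'_j$. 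Analogous constraints hold at server $i$, which is also in $M'$.

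The contradiction should then emerge from combining these bounds along a directed cycle through the arc $(i,j)$, which exists by Lemma~\ref{lem:strong-connectivity}. The hard part will be extracting a clean numerical contradiction from the accumulated strict inequalities, and a short case split on whether $(j,i)\in A$ seems unavoidable. In the two-cycle case, the pair of improvements $L'_j<L_i$ and $L'_i<L_j$, combined with the NE lower bounds on the outgoing jobs at $i$ and $j$ and with the identity $L'_i+L'_j=$ total arrivals at $\{i,j\}$ (since both are in $M'$), should force a negative job length. Otherwise, the longer return path from $j$ back to $i$ guaranteed by Lemma~\ref{lem:strong-connectivity} produces a chain of strict inequalities whose sum, matched against load conservation $\sum_k L_k=\sum_k L'_k$ restricted to the cycle's node set, will contradict the NE-forced lower bounds on the jobs traveling along the cycle.
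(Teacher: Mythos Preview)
Your plan misses the essential ingredient and, as sketched, does not close. The observation is a statement about \emph{minimal} deviation graphs, and the paper's proof exploits minimality directly rather than any load arithmetic: assuming $i,j\in M'$ with $(i,j)\in A$, one removes from the coalition the single job $J_k$ that travels along $(i,j)$ and simultaneously swaps the destinations of all jobs that were headed to $i$ with those headed to $j$. Since both $i$ and $j$ empty out under $\Delta$, the post-swap loads at $i$ and $j$ are exactly $L'_j$ and $L'_i$, so every remaining job sees the same completion time as before; hence $\textrm{IR}(\Delta')=\textrm{IR}(\Delta)$ with $\Gamma'\subsetneq\Gamma$, contradicting Definition~\ref{def:min-deviation}. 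That is the whole proof.

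Your route, by contrast, never invokes Definition~\ref{def:min-deviation}; the only use of minimality is the appeal to Lemma~\ref{lem:strong-connectivity}, and strong connectivity together with the NE/improvement inequalities you list is not enough. Concretely, in your two-cycle case the collected constraints $L'_j<L_i$, $L'_i<L_j$, $p\ge L_i-L_j$, $q\ge L_j-L_i$, $L'_j\ge p$, $L'_i\ge q$ are simultaneously satisfiable (take $L_i=L_j$, $L'_i=L'_j<L_i$, $p=q>0$), so no ``negative job length'' is forced. In the longer-cycle case your key step---``load conservation $\sum_k L_k=\sum_k L'_k$ restricted to the cycle's node set''---is simply false: jobs enter and leave any proper subset of servers, and by Lemma~\ref{lem:no-cycles} the cycle through $(i,j)$ cannot span all of $M$, so the restriction to a proper subset is unavoidable. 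Without that conservation identity the telescoping of your strict inequalities has nothing to collide with. The fix is not to sharpen the numerics but to use minimality in the combinatorial way the paper does.
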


\begin{proof}
Suppose to the contrary that $a_{j}\neq\delta^{+}(j)+1$. According
to Observation~\ref{obs:out-degrees}, we have $a_{j}=\delta^{+}(j)$, which implies
that all the jobs assigned to server $i$ and server $j$ in the NE
assignment belong to coalition $\Gamma$.

Since $(i,j)\in A$, there is a job $J_{k}\in\Gamma$ that migrates from server $i$ to server $j$.
Consider the new coalition $\Gamma'$ formed by all members of $\Gamma$ except
$J_{k}$. Then we have $\emptyset \neq \Gamma'\subset\Gamma$. Let $\Delta'$ be such a coalitional deviation
of $\Gamma'$ that is the same as $\Delta$ except without the involvement of $J_{k}$ and the job(s) that
migrate(s) to $i$ (resp.~$j$) in $\Delta$ will migrate to $j$ (resp.~$i$) in $\Delta'$. Then we
have $\textrm{IR}(\Delta')=\textrm{IR}(\Delta)$, contradicting the minimality of the deviation
graph $G$ according to Definition~\ref{def:min-deviation}.
\end{proof}

\smallskip

The following observation is a direct consequence of Observation~\ref{obs:out-degree-relation}:

\begin{observation}\label{obs:equiv-min-deviation}
Assume $i,j \in M'$. Hence $(i,j),(j,i) \not\in A$ according
to Observation~\ref{obs:out-degree-relation}. Let $\Delta'$ be the same as $\Delta$
except that any job that migrates to $i$ (resp.~$j$) in $\Delta$ will migrate to $j$
(resp.~$i$) in $\Delta'$. Then $\textrm{IR}(\Delta')=\textrm{IR}(\Delta)$, and $G(\Delta')$ is also
minimal.
\end{observation}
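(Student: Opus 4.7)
\medskip
\noindent\emph{Proof proposal.}
The plan is to verify three things in order: (i) $\Delta'$ is a well-defined coalitional deviation with the same coalition as $\Delta$, (ii) the post-deviation workloads on $i$ and $j$ simply swap, and (iii) minimality then transfers automatically.

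First, since $i,j \in M'$, we have $a_i = \delta^+(i)$ and $a_j = \delta^+(j)$, so every job originally at $i$ leaves $i$ in $\Delta$ (and similarly for $j$). Moreover, Observation~\ref{obs:out-degree-relation} rules out the arcs $(i,j)$ and $(j,i)$, so no job in $\Delta$ moves directly between $i$ and $j$. Consequently, the jobs leaving $i$ in $\Delta$ all have destinations in $M\setminus\{i,j\}$, and likewise for $j$. The description of $\Delta'$ therefore does not alter \emph{which} jobs migrate; it only reroutes arrivals between the two sinks $i$ and $j$. Hence $\Gamma(\Delta')=\Gamma(\Delta)$.

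Next I would compare the post-deviation workloads. For any $k\notin\{i,j\}$, nothing concerning $k$ has changed, so $L'_k(\Delta')=L'_k(\Delta)$. For server $i$, because $i\in M'$ its original jobs all leave, and the new arrivals in $\Delta'$ are exactly those that arrived at $j$ in $\Delta$; thus $L'_i(\Delta')=L'_j(\Delta)$, and symmetrically $L'_j(\Delta')=L'_i(\Delta)$. Consequently, each job $J_\ell\in\Gamma$ has its post-deviation completion time either unchanged (if its destination was not $i$ or $j$) or replaced by the other of $\{L'_i(\Delta),L'_j(\Delta)\}$ (if it was rerouted). In every case the individual improvement ratio of $J_\ell$ is preserved, which shows both that $\Delta'$ is a valid coalitional deviation (each IR remains $>1$) and that $\mathrm{IR}(\Delta')=\mathrm{IR}(\Delta)$.

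Finally, for minimality: since $\Gamma(\Delta')=\Gamma(\Delta)$, any proper sub-coalition of $\Gamma(\Delta')$ is a proper sub-coalition of $\Gamma(\Delta)$. The minimality of $G(\Delta)$ guarantees that every such alternative deviation $\Delta''$ satisfies $\mathrm{IR}(\Delta'')<\mathrm{IR}(\Delta)=\mathrm{IR}(\Delta')$, so $G(\Delta')$ is minimal as well. I expect the only mildly delicate step to be bookkeeping in (ii)---specifically confirming that \emph{no} residual jobs remain on $i$ or $j$ after $\Delta$, which is precisely what membership in $M'$ provides and why Observation~\ref{obs:out-degree-relation} is invoked to exclude the degenerate case of direct $i$-$j$ traffic.
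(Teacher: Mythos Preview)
Your argument is correct and is exactly the verification the paper leaves implicit (the paper states Observation~\ref{obs:equiv-min-deviation} as a ``direct consequence'' of Observation~\ref{obs:out-degree-relation} without further proof). One sentence in step~(ii) is misworded, though: you write that a rerouted job's post-deviation completion time is ``replaced by the other of $\{L'_i(\Delta),L'_j(\Delta)\}$,'' but in fact it is \emph{unchanged}---a job sent to $i$ in $\Delta$ had completion time $L'_i(\Delta)$, and after rerouting to $j$ in $\Delta'$ its completion time is $L'_j(\Delta')=L'_i(\Delta)$ by the swap you established just before. Your conclusion that each individual IR is preserved is correct, but it follows precisely because the completion time does \emph{not} change, so the ``replaced by the other'' phrasing should be dropped.
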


\section{A Key Inequality}

To help our analysis, we will introduce in this section a special arc set
$\widetilde{A}\subseteq A$ in the minimal deviation graph $G(\Delta)$.

\subsection{Auxiliary Arc Set $\widetilde{A}$}
\label{sec:auxiliary_arc_set}

For any node $i \in M$, denote $Q^{+}(i) :=\{j\in M:\ (i, j) \in A \}$ and
$Q^{-}(i):=\{j \in M:\ (j, i) \in A\}$. For notational convenience, for any
node set $S\subseteq M$, we denote $Q^+(S):=\bigcup_{i\in S}Q^+(i)$ and
$Q^-(S):=\bigcup_{i\in S}Q^-(i)$. With $A$ replaced by $\widetilde{A}$ above,
we similarly define $\widetilde{Q}^{+}(i)$, $\widetilde{Q}^{-}(i)$,
$\widetilde{Q}^{+}(S)$ and $\widetilde{Q}^{-}(S)$.

Let us define $\widetilde{A}$ as follows. According to Lemma~\ref{lem:in-degree},
$|Q^{-}(i)|\geq 1$ for
any $i\in M$. For each $i\in M$ we pick up an arc from the non-empty set $Q^{-}(i)$ to form an $m$-element
subset $\widetilde{A} \subseteq A$. Then $\widetilde{A}$ possesses the following property:
\begin{equation}\label{eqn:in-degree}
    |\widetilde{Q}^-(i)| = 1 \textrm{ for any } i \in M.
\end{equation}
Denote $b_i := |\widetilde{Q}^+(i)|$ for any $i\in M$. Then it is clear that
\begin{equation}\label{eqn:out-degree}
    \sum_{i=1}^m b_i = |\widetilde{A}|= m.
\end{equation}
If node set $S$ is a singleton, then we will also use $S$ to denote the singleton if no confusion
can arise. Hence, due to (\ref{eqn:in-degree}) we will also use $\widetilde{Q}^-(i)$ to denote the single
\emph{element} of the corresponding set. Any arc set $\widetilde{A} \subseteq A$
that possesses property (\ref{eqn:in-degree}) is said to be \emph{tilde-valid}.

\subsection{Main Result}

Our main result is stated in the following theorem.
\begin{theorem}\label{thm:main}
For any minimal deviation graph $G(\Delta_m)$ involving $m$ servers, its improvement ratio
$\textrm{IR}(\Delta_{m})\leq {5}/{4}$.
\end{theorem}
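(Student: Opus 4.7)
The plan is a proof by contradiction: assume $\textrm{IR}(\Delta_m) > 5/4$. Since $\textrm{IR}(\Delta_m)$ is the minimum improvement ratio over all participating jobs, every arc $(i,j) \in A$ carries a job whose post-deviation completion time satisfies $L'_j < \frac{4}{5} L_i$. The first move is to aggregate this local inequality across an arbitrary tilde-valid arc set $\widetilde{A}$. By (\ref{eqn:in-degree}) each $j \in M$ has a unique tilde-predecessor $\widetilde{Q}^-(j)$, so summing $L'_j < \frac{4}{5} L_{\widetilde{Q}^-(j)}$ over $j \in M$ and invoking (\ref{eqn:out-degree}) yields $\sum_j L'_j < \frac{4}{5}\sum_i b_i L_i$. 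Workload conservation $\sum_j L'_j = \sum_j L_j$ then rearranges this into the key inequality
\[\sum_{i \in M} (4 b_i - 5)\, L_i \;>\; 0. \qquad (\star)\]

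The remainder of the proof derives a contradiction from $(\star)$ by combining the tools already in place. The normalization $L_i \geq 1$ from (\ref{eq:scalling}) and Observation~\ref{obs:bound-of-Li} pin $L_i \in [1, a_i/(a_i-1)]$, and the inequality $b_i \leq \delta^+(i) \leq a_i$ (from Observation~\ref{obs:out-degrees}) gives the sharper $L_i \leq b_i/(b_i-1)$ whenever $b_i \geq 2$. Together with the counting identity $\sum_i b_i = m$, these bounds limit how much positive mass $(\star)$ can accumulate. Crucially, Lemma~\ref{lem:no-cycles} forbids the profile $b_i \equiv 1$ --- otherwise $\widetilde{A}$ would be a spanning union of node-disjoint cycles --- so every tilde-valid $\widetilde{A}$ must have at least one $b_i = 0$ together with at least one $b_i \geq 2$, providing the negative contributions that will drive the contradiction.

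The main obstacle I expect is that a crude term-by-term bounding of $(\star)$ is borderline: a node with $b_i = 2$ can contribute as much as $+6$ while a node with $b_i = 0$ contributes only $-5$, so naive counting is not sufficient. To close the gap I would exploit the freedom in selecting $\widetilde{A}$, guided by Observations~\ref{obs:out-degree-relation} and~\ref{obs:equiv-min-deviation}: every arc out of $M'$ is forced into $M''$, so each tilde-in-arc of a node in $M'$ must originate in $M''$, and one may symmetrize the roles of any two nodes in $M'$ without changing the improvement ratio. Via a swap-based rerouting that preserves tilde-validity and minimality, the plan is to produce a specific $\widetilde{A}$ whose profile $(b_i)$ is as flat as possible --- ideally with $\max_i b_i \leq 2$ and enough $b_i \in \{0,1\}$ nodes to absorb the $+6$ contributions --- after which a case analysis (splitting on the sizes of $M'$ and $M''$ and the location of high-$b_i$ nodes) forces $\sum_i (4 b_i - 5) L_i \leq 0$, contradicting $(\star)$ and completing the proof.
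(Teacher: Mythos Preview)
Your derivation of $(\star)$ is exactly the paper's starting point: the paper obtains $r\le\sum_i b_iL_i/\sum_i L_i$, substitutes the extremal $L_i$-values, and reduces the target $r\le 5/4$ to the combinatorial inequality~(\ref{eqn:ultimate}), which is the rearranged form of your $(\star)$ with $L_i$ replaced by $a_i/(a_i-1)$ or $1$. So through the first paragraph you are on the paper's track.

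The gap is everything after that. Two concrete problems. First, your ``sharper'' bound $L_i\le b_i/(b_i-1)$ is in fact weaker than $L_i\le a_i/(a_i-1)$, since $x/(x-1)$ is decreasing and $b_i\le a_i$; it therefore cannot tighten the positive terms of $(\star)$. Second, and more seriously, the ``flattening'' plan does not close the gap even in principle: with $\max_i b_i\le 2$ one can still have half the nodes at $b_i=2$ (with $L_i=2$) and half at $b_i=0$ (with $L_i=1$), giving $\sum_i(4b_i-5)L_i=m/2>0$. No amount of rerouting of $\widetilde{A}$ alone can rule this out, because the dangerous configuration is precisely $M_2^2$ being large, and that depends on the $a_i$ as well as the $b_i$.

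What the paper actually does after $(\star)$ is substantial and is not hinted at in your sketch. It isolates the only troublesome classes $M_2^2$ and $M_3^3$, builds an auxiliary node set $W=W_0\cup W_1\cup\widetilde W_1$ from the tilde-structure (Lemmas~\ref{lem:W0}--\ref{lem:W1}), partitions $X=M_2^2\cup M_3^3$ into six pieces $X_1,\dots,X_6$ according to how $\widetilde Q^+$ interacts with $W$, and then proves for each piece an injection into a disjoint ``paying'' set $Y_t\subseteq M\setminus X$ (Lemmas~\ref{lem:X2}--\ref{lem:company} and Corollary~\ref{cor:X6}). The freedom in choosing $\widetilde A$ and even $\Delta$ is used, but in a much more targeted way than you propose: it is refined through four nested lexicographic assumptions (Assumptions~\ref{ass:mimimum-W0}--\ref{ass:Delta}) whose purpose is to make specific injection lemmas go through, not to flatten the $b$-profile globally. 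Your outline would need all of this machinery to become a proof.
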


Let us perform some initial investigation to see what we need to do to prove the theorem.
Recall that, for any $i\in M$, $a_{i}$ is the number of jobs assigned to
server $i$ in the NE assignment and $b_i = |\widetilde{Q}^+(i)|$ for a fixed
arc set $\widetilde{A}$ defined in Section~\ref{sec:auxiliary_arc_set} for the minimal deviation graph $G(\Delta_m)$. For a pair of integers $a$ and $b$ with $2\leq a\leq m$ and $0\leq b\leq
a$, let $M_{a}^{b}:=\{i\in M:\ a_{i}=a,b_{i}=b\}$. Then it is clear that
\begin{equation}\label{eqn:def-of-Mab}
    \bigcup_{2\leq a\leq m}\bigcup_{0\leq b\leq a} M_{a}^{b} = M.
\end{equation}
Denote $m_{a}^{b}=|M_{a}^{b}|$ for all possible pairs $a$
and $b$: $2\le a\le m$ and $0\le b \le a$. Let $r=\textrm{IR}(\Delta_{m})$. Then according to (\ref{eqn:in-degree})
and (\ref{eqn:def-of-Mab}), we have
\begin{equation}\label{eqn:equations-for-Dab}
    \sum\limits_{a=2}^{m}\sum\limits_{b=0}^{a}m_{a}^{b}=m, \ \textrm{ and } \
    \sum\limits_{a=2}^{m}\sum\limits_{b=0}^{a}b\,m_{a}^{b}=m.
\end{equation}
According to the definition of IR, we have $rL_{j}'\leq L_{i}$ for $(i,j) \in A$.
Summing up these inequalities over all $m$ arcs in $\widetilde{A}$ leads to
\[
\sum_{j=1}^{m}rL_{j}'\leq\sum_{i=1}^{m}b_{i}L_{i},
\]
which implies that
\begin{equation}\label{eqn:bound-of-r}
    r\leq\frac{\sum\limits_{i=1}^{m}b_{i}L_{i}}{\sum\limits_{i=1}^{m}L_{i}}.
\end{equation}
According to Observation~\ref{obs:bound-of-Li}, we have $L_{i}\leq{a_{i}}/({a_{i}-1})\leq2$,
which implies that the right-hand side of (\ref{eqn:bound-of-r}), which we denote by $R$, is
at most 2, since $R$ is a convex combination of $L_1, \ldots, L_m$ and $0$ with the corresponding combination coefficients $\lambda_i=b_i/\sum_{k=1}^m L_k$ ($i=1, \ldots, m$) and $\lambda_{m+1}=1- m/\sum_{k=1}^m L_k\ge 0$ due to (\ref{eq:scalling}) and (\ref{eqn:out-degree}). On the other hand,
since $r\ge 1$ according to the definition, we conclude that $1\le R\le 2$, which
implies that $R$ is a decreasing function of $L_{i}$ for which $b_{i}=0$ or $b_{i}=1$,
and an increasing function of $L_{i}$ for which $b_{i}\geq2$. Therefore, we increase $R$ by
increasing $L_{i}$ to ${a_{i}}/({a_{i}-1})$ for $i$ such that $b_{i}\geq2$, and by decreasing
$L_{i}$ to $1$ for $i$ such that $b_{i}=0$ or $b_{i}=1$. Noticing that $2\le a_i\le m$ according to Observation~\ref{obs:out-degrees}, we obtain
$$
r\leq\frac{\sum\limits_{a=2}^{m}\sum\limits_{b=2}^{a}\frac{ab}{a-1}m_{a}^{b}
+\sum\limits_{a=2}^{m}m_{a}^{1}}{\sum\limits_{a=2}^{m}\sum\limits_{b=2}^{a}\frac{a}{a-1}m_{a}^{b}
+\sum\limits_{a=2}^{m}m_{a}^{1}+\sum\limits_{a=2}^{m}m_{a}^{0}},
$$
which together with (\ref{eqn:equations-for-Dab}) implies that
$$
r\leq\frac{\sum\limits_{a=2}^{m}\sum\limits_{b=2}^{a}
\frac{b}{a-1}m_{a}^{b}+m}{\sum\limits_{a=2}^{m}
\sum\limits_{b=2}^{a}\frac{1}{a-1}m_{a}^{b}+m}.
$$
In order to prove Theorem~\ref{thm:main}, we need to show $r\leq {5}/{4}$. Then it suffices to show
$$
\frac{\sum\limits_{a=2}^{m}\sum\limits_{b=2}^{a}
\frac{b}{a-1}m_{a}^{b}+m}{\sum\limits_{a=2}^{m}
\sum\limits_{b=2}^{a}\frac{1}{a-1}m_{a}^{b}+m}\leq\frac{5}{4},
$$
which is equivalent to
$$
\sum_{a=2}^{m}\sum_{b=2}^{a}\frac{4b-5}{a-1}m_{a}^{b}\leq m.
$$
By replacing the right-hand side $m$ of the above inequality with the left-hand side of the second equality in (\ref{eqn:equations-for-Dab}), we have
$$
0\leq\sum_{a=2}^{m}m_{a}^{1}+\sum_{a=2}^{m}\sum_{b=2}^{a}
\left(b-\frac{4b-5}{a-1}\right)m_{a}^{b},
$$
that is
\begin{equation}\label{eqn:ultimate}
m_{2}^{2}+\frac{1}{2}m_{3}^{3}\leq\sum_{a=2}^{m}m_{a}^{1}+\frac{1}{2}m_{3}^{2}
+\sum_{a=4}^{m}\sum_{b=2}^{a}\left(b-\frac{4b-5}{a-1}\right)m_{a}^{b}.
\end{equation}

In what follows, we are to prove (\ref{eqn:ultimate}) and thereby Theorem~\ref{thm:main} through a series of lower bounds established in Section~\ref{sec:bounding} on different terms of the right-hand side of inequality~(\ref{eqn:ultimate}).

\section{Preparations}

We introduce an auxiliary node set $W$ in addition to the auxiliary arc set $\widetilde{A}$ introduced earlier.

\subsection{Auxiliary Node Set $W$}
\label{sec:auxiliary_node_set}

Let
\[
W_{0}:= \{ i\in M: b_i = 0\}.
\]
Then we immediately have
\begin{lemma}\label{lem:W0}
$W_{0}\neq\emptyset$.
\end{lemma}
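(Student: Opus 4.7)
The plan is to prove $W_0 \neq \emptyset$ by contradiction, invoking Lemma~\ref{lem:no-cycles}. Suppose for contradiction that $W_0 = \emptyset$, which means $b_i \geq 1$ for every $i \in M$. Combined with the degree sum identity (\ref{eqn:out-degree}), namely $\sum_{i=1}^{m} b_i = m$, this forces $b_i = 1$ for every $i \in M$.

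Now consider the spanning subgraph $\widetilde{G} = (M, \widetilde{A})$. By construction (\ref{eqn:in-degree}), every node has in-degree exactly $1$ in $\widetilde{G}$, and by the preceding paragraph every node has out-degree exactly $1$. A directed graph in which every vertex has both in-degree and out-degree equal to $1$ decomposes uniquely into a vertex-disjoint union of directed cycles that cover all vertices. Hence $\widetilde{A}$ yields a spanning set of node-disjoint directed cycles in $G(\Delta)$.

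Since $\widetilde{A} \subseteq A$, this spanning set of cycles lies inside the full deviation graph $G(\Delta)$. But our standing assumption is that all $m$ servers are involved in the coalitional deviation, so Lemma~\ref{lem:no-cycles} precludes the existence of such a spanning cycle decomposition. This contradiction establishes $W_0 \neq \emptyset$.

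The argument is essentially a one-line pigeonhole-plus-structure observation, so there is no serious obstacle; the only point that warrants care is verifying the decomposition claim, which follows immediately because out-degree $1$ at every vertex makes $\widetilde{G}$ a functional graph, and in-degree $1$ at every vertex rules out transient (non-cyclic) tails, leaving only disjoint cycles covering $M$.
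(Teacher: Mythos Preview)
Your proof is correct and follows essentially the same approach as the paper: assume $W_0=\emptyset$, use the degree-sum identity to force $b_i=1$ for all $i$, conclude that $\widetilde{A}$ forms node-disjoint directed cycles spanning $M$, and contradict Lemma~\ref{lem:no-cycles}. The only difference is that you spell out in more detail why in-degree and out-degree both equal to $1$ yields a spanning cycle decomposition, which the paper leaves implicit.
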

\begin{proof}
Suppose to the contrary that $W_{0}=\emptyset$. Then any $b_i \ge 1$ in (\ref{eqn:out-degree}), which implies that
$b_{i}=1$ for any $i \in M$, so that $\widetilde{A}$ forms some node-disjoint directed cycles
that span all nodes, contradicting Lemma~\ref{lem:no-cycles}.
\end{proof}

Note that, from the formation of arc set $\widetilde{A}$, it is clear that $\widetilde{A}$ as
a tilde-valid arc set may not be unique. However, among all possible choices of a tilde-valid
arc set $\widetilde{A} \subseteq A$, we choose one that has some additional properties in terms of
minimum cardinalities of some combinatorial structures, which we shall define in due course.
These additional properties will be presented in a sequence of three assumptions, which are made without
loss of generality due to the finiteness of the total number of tilde-valid arc sets.
With the same reason, we assume that our coalitional deviation $\Delta$ is chosen in such a way that
it has a certain property (see Assumption~\ref{ass:Delta}).
\begin{assumption}\label{ass:mimimum-W0}
    Arc set $\widetilde{A}$ is tilde-valid and it minimizes $|W_0(\widetilde{A})|$.
\end{assumption}

Let $\widetilde{W}_0 = Q^+(W_0)$. Then $\widetilde{W}_0 \neq \emptyset$ according to Lemmas~\ref{lem:W0} and
\ref{lem:out-degree}. A node $i\in M$ is said to be \emph{associated} with $W_0$ if it is \emph{linked} to
an element of $\widetilde{W}_0$ through a sequence of arcs (but not a directed path) in $\widetilde{A}$ and $A$ in alternation (see Fig.~\ref{fig:W_1} for an illustration). More
formally, $i\in M$ is associated with $W_0$ if and only if, for some integer $k\ge 0$, there are nodes $\{i_0,
\ldots, i_k, j_0, \ldots, j_k\} \subseteq M$ with $i = i_k$ and $j_0 \in \widetilde{W}_0$, such that
\begin{equation}\label{eqn:def-of-W1}
(i_0,j_0), \ldots, (i_k,j_k) \in \widetilde{A} \ \ \textrm{ and  } \ \ (i_0,j_1), \ldots, (i_{k-1},j_k) \in A.
\end{equation}

\begin{figure}[h!]
\centering
NB: the solid arcs belong to $\widetilde{A}$ and dotted arcs to $A$ \\
\includegraphics[scale=0.6,viewport=30mm 220mm 165mm 280mm,clip=true]{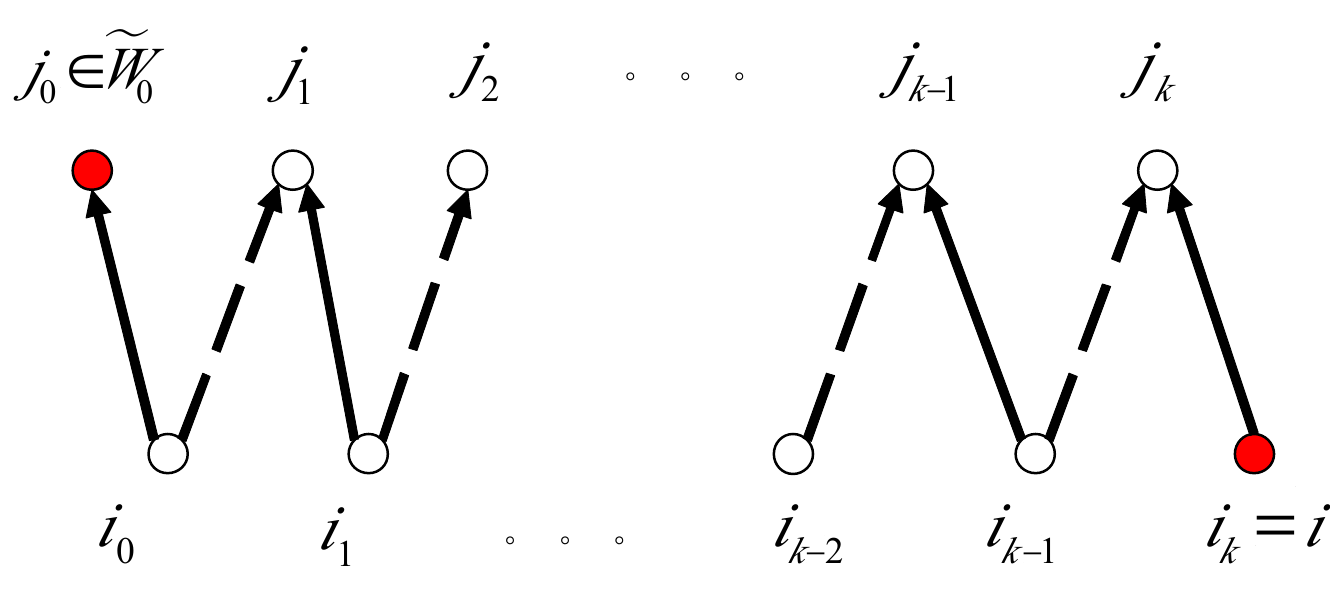}
  \caption{Definition of Node Set $W_1$}\label{fig:W_1}
\end{figure}

\noindent
Note that in the above definition, if $i=i_k$ is associated with $W_0$, then $i_0$,
\ldots, $i_{k-1}$ used in (\ref{eqn:def-of-W1}) are each associated with $W_0$. Define
\begin{eqnarray*}
  W_1 &:=& \{ i\in M:\ \textrm{node  $i$  is associated with }  W_0 \}, \\
  \widetilde{W}_1 &:=& \widetilde{Q}^+(W_1).
\end{eqnarray*}
Immediately we have
$\widetilde{Q}^-(\widetilde{W}_0)\subseteq W_1$, which implies that
\begin{equation}\label{eqn:tilde-W0-versus-tilde-W1}
\widetilde{W}_0 \subseteq \widetilde{W}_1.
\end{equation}
On the other hand, since $\widetilde{Q}^-(\widetilde{W}_0) \neq \emptyset$ according to (\ref{eqn:in-degree}),
we have $W_1\neq \emptyset$.

\begin{lemma}\label{lem:W1}
    For any $i\in W_1$, $b_i =1$. Furthermore, $Q^+(W_0 \cup W_1) = \widetilde{W}_1$.
\end{lemma}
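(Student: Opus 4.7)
The plan is to prove both assertions by combining Assumption~\ref{ass:mimimum-W0} with the recursive alternating structure underlying the definition of $W_1$. The driving observation is that any node $i\in W_1$ is the tail of an $\widetilde{A}$-arc $(i_k,j_k)$, so $b_i\ge 1$ and hence $W_0\cap W_1=\emptyset$; the first assertion is then a minimality argument that rules out $b_i\ge 2$, and the second is a one-step extension of a witnessing sequence.

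For the first assertion, I would argue by contradiction: suppose some $i\in W_1$ has $b_i\ge 2$, and fix a witnessing sequence $(i_0,j_0),\ldots,(i_k,j_k)\in\widetilde{A}$ and $(i_0,j_1),\ldots,(i_{k-1},j_k)\in A$ of \emph{minimum} length $k$, with $j_0\in\widetilde{W}_0$ and $i_k=i$; minimality will guarantee that the $j_\ell$'s are pairwise distinct, since otherwise one could splice out a redundant loop and shorten the sequence. Because $j_0\in\widetilde{W}_0=Q^+(W_0)$, pick some $i_{-1}\in W_0$ with $(i_{-1},j_0)\in A$. Now define a new candidate arc set $\widetilde{A}'$ by rerouting the unique incoming $\widetilde{A}$-arc at each $j_\ell$ from $(i_\ell,j_\ell)$ to $(i_{\ell-1},j_\ell)$ for $\ell=0,1,\ldots,k$; this is legal since each replacement arc lies in $A$. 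By construction $\widetilde{A}'$ is tilde-valid, and a short telescoping calculation shows that the net change in out-degree counts is $\Delta b_{i_{-1}}=+1$, $\Delta b_{i_k}=-1$, and $\Delta b_x=0$ for every other $x$. Since $W_0\cap W_1=\emptyset$, we have $i_{-1}\ne i_k$; thus $i_{-1}$ leaves $W_0$ (its new $b$ becomes $1$) while the assumption $b_{i_k}\ge 2$ prevents $i_k$ from entering $W_0$. Hence $|W_0(\widetilde{A}')|=|W_0(\widetilde{A})|-1$, contradicting Assumption~\ref{ass:mimimum-W0} and forcing $b_i=1$.

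For the second assertion, the inclusion $\widetilde{W}_1\subseteq Q^+(W_0\cup W_1)$ is immediate from $\widetilde{A}\subseteq A$ and $W_1\subseteq W_0\cup W_1$. For the reverse, take $j\in Q^+(W_0\cup W_1)$, so that $(i,j)\in A$ for some $i\in W_0\cup W_1$, and let $i':=\widetilde{Q}^-(j)$, the unique $\widetilde{A}$-predecessor of $j$. If $i\in W_0$, then $j\in Q^+(W_0)=\widetilde{W}_0$, so $(i',j)\in\widetilde{A}$ with $j\in\widetilde{W}_0$ exhibits $i'$ as associated with $W_0$ through the $k=0$ case of the definition, giving $i'\in W_1$. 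If instead $i\in W_1$ through a witnessing sequence of length $k$ ending at $i=i_k$, I would append $(i',j)\in\widetilde{A}$ and $(i,j)=(i_k,j)\in A$ to the existing sequence to produce a length-$(k+1)$ witness that $i'\in W_1$. Either way, $(i',j)\in\widetilde{A}$ and $i'\in W_1$ yield $j\in\widetilde{Q}^+(W_1)=\widetilde{W}_1$.

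The main obstacle is the first assertion, and specifically the verification that the rerouting along the alternating sequence genuinely strictly decreases $|W_0|$. Three ingredients must align cleanly: distinctness of the $j_\ell$'s (obtained by choosing a minimum-length sequence so that the swap is unambiguous), the disjointness $W_0\cap W_1=\emptyset$ (so $i_{-1}\ne i_k$ and the two effects do not cancel), and the telescoping that isolates the net changes at exactly the two endpoints of the sequence. Once these are in place, Assumption~\ref{ass:mimimum-W0} delivers the contradiction.
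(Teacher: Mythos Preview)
Your proof is correct and follows essentially the same approach as the paper's: the same alternating-sequence rerouting of $\widetilde{A}$ to contradict Assumption~\ref{ass:mimimum-W0} for the first assertion, and the same one-step extension of a witnessing sequence for the second. Your added care in taking a minimum-length witness to secure distinctness of the $j_\ell$'s is a detail the paper leaves implicit, but otherwise the arguments coincide.
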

\begin{proof}
It is clear from the definition that $W_1\cap W_0 = \emptyset$. Hence $b_i \ge 1$ for any $i\in W_1$. Assume
for contradiction that $b_i \ge 2$ for some $i\in W_1$. Since $i$ is associated with $W_0$, in addition to nodes
$\{i_0,\ldots, i_k, j_0, \ldots, j_k\} \subseteq M$ satisfying
(\ref{eqn:def-of-W1}), we have a node $h \in W_0$ (hence $h \not\in \{i_0, \ldots, i_k\}$)
such that $(h,j_0)\in A$ according to the definition
of $\widetilde{W}_0$. Now we remove $k+1$ arcs $(i_0,j_0), \ldots, (i_k,j_k)$ from $\widetilde{A}$ and add $k+1$
new arcs $(h,j_0), (i_0,j_1), \ldots, (i_{k-1},j_k)$ to $\widetilde{A}$. It is easy to see that the new set
$\widetilde{A}$ still has property (\ref{eqn:in-degree}). Additionally, under the new
$\widetilde{A}$, all $\{b_k\}$ remain the same except two of them: $b_h$ and $b_i$, with the former increased
by 1 and the latter decreased by 1. Since $b_i \ge 2$ under the original $\widetilde{A}$, then $i\not\in W_0$ under
the new $\widetilde{A}$. Consequently, the new $W_0$ determined by the new $\widetilde{A}$ contains a smaller number
of elements, contradicting Assumption~\ref{ass:mimimum-W0} about the original $\widetilde{A}$.

To prove the second part of the lemma, let us first prove $Q^+(W_1)\subseteq \widetilde{W}_1$. Let $i \in W_1$
and $(i,j) \in A$. We show that $j\in \widetilde{W}_1$. In fact, since
$|\widetilde{Q}^-(j)| =1$ according to (\ref{eqn:in-degree}), we have a node
$h \in M$ such that $(h,j) \in \widetilde{A}$. Now since $i$ is associated
with $W_0$, we conclude that $h$ is also associated with $W_0$, which implies
that $j\in \widetilde{W}_1$. Therefore, with (\ref{eqn:tilde-W0-versus-tilde-W1}) we have proved that $Q^+(W_0 \cup W_1)
\subseteq \widetilde{W}_1$. The other direction of the inclusion is apparent.
\end{proof}

It follows from Lemma~\ref{lem:W1} and (\ref{eqn:in-degree}) that the mapping $\widetilde{Q}^+(\cdot)$ from $W_1$
onto $\widetilde{W}_1$ is a one-to-one correspondence and hence
\begin{equation}\label{eqn:cadinality-of-W1}
    |W_1| = |\widetilde{W}_1| > 0.
\end{equation}
Let
\[
W := W_0 \cup W_1\cup\widetilde{W}_1.
\]

\subsection{Notation}

As we can see from inequality~(\ref{eqn:ultimate}), bounding the sizes $m_2^2$ and $m_3^3$ of the respective
sets $M_2^2$ and $M_3^3$ is vital in our establishment of the desired bound. We therefore
take a close look at the two sets by partitioning
$$
X := M_2^2 \cup M_3^3
$$
into a number of subsets, so that different bounding arguments can be applied to
different subsets.

We assemble our notations here in one place for easy reference and the reader is advised to conceptualize each \emph{only} when it is needed in an analysis at a later point.

Let $\widetilde{M}_2^2 := \{\ell\in M_2^2:\ \widetilde{Q}^{+}(\ell)\nsubseteq W\}$.
For convenience, we reserve letter $\ell$ to exclusively index elements of $\widetilde{M}_2^2$ and
let $\widetilde{Q}^+(\ell) = \{\ell_1, \ell_2\}$ with the understanding that it is
always the case that $\ell_1 \notin W$. For any $\ell \in \widetilde{M}_2^2$,
$\ell_1 \notin W$ implies $b_{\ell_1} \ge 1$ since $W_0 \subseteq W $. On the
other hand, since arc $(\ell, \ell_1)\in \widetilde{A}\subseteq A$, we have
$a_{\ell_1} = \delta^+(\ell_1) + 1 \ge b_{\ell_1} +1$ according to
Observation~\ref{obs:out-degree-relation}, which implies that $\ell_1$ must belong to one
of the following three mutually disjoint node sets:
\begin{eqnarray*}
  Z_1 &:=& \{i\in M\backslash W:\ b_i =1\}, \\
  Z_2 &:=& \{i\in M\backslash W:\ b_i > 1, a_i > b_i+1\}, \\
  Z &:=& \{i\in M\backslash W:\ b_{i} > 1, a_{i} = b_{i}+1\}.
\end{eqnarray*}
Therefore, if we define
\[
\left\{
\begin{array}{ccl}
X_2 &:=& \{ i\in M_{3}^{3}:\ \widetilde{Q}^{+}(i)\nsubseteq W \}; \\
X_3 &:=& \{ \ell \in \widetilde{M}_2^2:\ \widetilde{Q}^+(\ell)\cap W = \emptyset\}; \\
X_4 &:=& \{ \ell \in \widetilde{M}_2^2 \backslash X_3:\ \ell_1 \in Z_1\cup Z_2\}; \\
X_5 &:=& \{ \ell \in \widetilde{M}_2^2\backslash X_3:\ \ell_1 \in Z,
          Q^{+}(\ell_1)\nsubseteq W, \delta^{-}(\ell_1)>1 \}; \\
X_6 &:=& \{ \ell \in \widetilde{M}_2^2\backslash X_3:\ \ell_1 \in Z,
          Q^{+}(\ell_1)\nsubseteq W, \delta^{-}(\ell_1)= 1 \};
\end{array}
\right.
\]
then we have
\[
X_{11}:= \widetilde{M}_2^2\setminus{\textstyle\bigcup_{k=3}^6} X_k =
\{\ell\in \widetilde{M}_2^2\backslash X_3:\ \ell_1 \in Z,
            Q^{+}(\ell_1)\subseteq W \},
\]
and
\begin{equation}\label{eqn:bipartition}
    |\widetilde{Q}^+(\ell)\cap W| = |\widetilde{Q}^+(\ell)\backslash W| = 1
\end{equation}
for any $\ell \in \widetilde{M}_2^2\backslash X_3 = X_4\cup X_5\cup X_6\cup X_{11}$. In other words, for any element $\ell\in\widetilde{M}_2^2\backslash X_3$, the two-element set $\widetilde{Q}^+(\ell)$ has exactly one element in $W$. Now let
\[
X_1 := \{ i\in X:\ \widetilde{Q}^{+}(i)\subseteq W \}
                 \cup X_{11}.
\]
Clearly, $X_{i}\cap X_{j}=\emptyset$ ($1\leq i \neq j\leq 6$) and
$X=\bigcup_{k=1}^6 X_{k}$.

\section{Proving Upper Bounds}
\label{sec:bounding}

To bound from below the right-hand side of the key inequality (\ref{eqn:ultimate}),
or equivalently, to bound from above the left-hand side of (\ref{eqn:ultimate}),
we establish through a series of five lemmas and a corollary that the number of nodes in $X_t$ is at most $c_t|Y_t|$, where $c_t \in \{1, \frac12\}$ and $Y_t$ are defined below for $t=1, \ldots, 6$
with $Y_1, \ldots, Y_6 \subseteq M \backslash X$ mutually disjoint node sets. We divide our proofs into two parts with the second part on bounding $|X_6|$.

\subsection{Part 1}

Let us start with some straightforward upper bounds. Since $\widetilde{Q}^{+}(i)\backslash W \neq \emptyset$ for any $i\in X_2$ according to the definition of $X_2$, we immediately have the following lemma thanks to
Observation~\ref{obs:out-degree-relation}.
\begin{lemma}\label{lem:X2}
Let $Y_2:= \bigcup_{i\in X_2} \widetilde{Q}^{+}(i)\backslash W$.
Then $Y_2\subseteq M''\backslash W$ and $|X_{2}|\le |Y_{2}|$. \ $\Box$
\end{lemma}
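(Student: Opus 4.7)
The plan is to verify each of the two conclusions separately, both of which reduce to short combinatorial observations about $\widetilde{A}$ and the definition of $X_2$.

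First, I would argue that every $i\in X_2$ lies in $M'$. Indeed, $i\in M_3^3$ means $a_i = b_i = 3$, and since $b_i=|\widetilde{Q}^+(i)|\le \delta^+(i)\le a_i$, the pair $a_i=b_i$ forces $\delta^+(i)=a_i$, placing $i$ in $M'$. Then for any arc $(i,j)\in \widetilde{A}\subseteq A$ with $i\in X_2\subseteq M'$, Observation~\ref{obs:out-degree-relation} gives $j\in M''$. Taking the union over $i\in X_2$ yields $\bigcup_{i\in X_2}\widetilde{Q}^+(i)\subseteq M''$, and subtracting $W$ from both sides gives the containment $Y_2 \subseteq M''\setminus W$.

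Next, for the cardinality bound, the key point is that the sets $\{\widetilde{Q}^+(i):i\in X_2\}$ are pairwise disjoint. This is immediate from property~(\ref{eqn:in-degree}): if some $j$ lay in two such sets, then it would have two distinct predecessors in $\widetilde{A}$, contradicting $|\widetilde{Q}^-(j)|=1$. Consequently the sets $\widetilde{Q}^+(i)\setminus W$ are also pairwise disjoint across $i\in X_2$, so
\[
|Y_2| \;=\; \sum_{i\in X_2} \bigl|\widetilde{Q}^+(i)\setminus W\bigr|.
\]
Finally, by the very definition of $X_2$, each $i\in X_2$ satisfies $\widetilde{Q}^+(i)\not\subseteq W$, i.e.\ $|\widetilde{Q}^+(i)\setminus W|\ge 1$, and summing gives $|Y_2|\ge |X_2|$.

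I do not foresee any serious obstacle here: the statement is essentially a bookkeeping consequence of (i) the $M'$-to-$M''$ structure supplied by Observation~\ref{obs:out-degree-relation} and (ii) the in-degree-one property~(\ref{eqn:in-degree}) of $\widetilde{A}$. The only subtlety worth stating explicitly is why $a_i=b_i$ suffices to put $i$ in $M'$, which is what allows Observation~\ref{obs:out-degree-relation} to be applied in the first place.
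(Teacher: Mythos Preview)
Your proof is correct and follows essentially the same approach as the paper, which only remarks that the lemma is immediate from the definition of $X_2$ (giving $\widetilde{Q}^+(i)\setminus W\neq\emptyset$) together with Observation~\ref{obs:out-degree-relation}. You have simply made explicit the two ingredients the paper leaves implicit: that $a_i=b_i$ forces $i\in M'$ so Observation~\ref{obs:out-degree-relation} applies, and that property~(\ref{eqn:in-degree}) makes the sets $\widetilde{Q}^+(i)$ pairwise disjoint.
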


Note that $|\widetilde{Q}^{+}(\ell)|=2$ for any $\ell\in X_3$ and
$\widetilde{Q}^+(i)\cap \widetilde{Q}^+(j) = \emptyset$ ($i\ne j$) due to (\ref{eqn:in-degree}),
which lead to the following lemma.
\begin{lemma}
    Let $Y_3 := \bigcup_{\ell\in X_3} \widetilde{Q}^{+}(\ell)$. Then $Y_3\subseteq M''
    \backslash W$ and $2|X_{3}| \le |Y_{3}|$. \ $\Box$
\end{lemma}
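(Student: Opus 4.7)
The plan is to verify the two assertions separately, both of which reduce to short combinatorial observations after unpacking the definitions of $X_3$ and $\widetilde{A}$.

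First I would confirm the inclusion $Y_3\subseteq M''\setminus W$. Fix any $\ell\in X_3\subseteq\widetilde{M}_2^2\subseteq M_2^2$, so that $a_\ell=2$ and $b_\ell=2$. Since $\delta^+(\ell)\ge b_\ell=2$ and Observation~\ref{obs:out-degrees} gives $\delta^+(\ell)\in\{a_\ell,a_\ell-1\}=\{1,2\}$, we must have $\delta^+(\ell)=a_\ell=2$, i.e.\ $\ell\in M'$. Hence for every $j\in\widetilde{Q}^+(\ell)\subseteq Q^+(\ell)$ the arc $(\ell,j)$ lies in $A$ with tail in $M'$, and Observation~\ref{obs:out-degree-relation} then forces $j\in M''$. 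Combining this with the defining property $\widetilde{Q}^+(\ell)\cap W=\emptyset$ of $X_3$ yields $\widetilde{Q}^+(\ell)\subseteq M''\setminus W$, and taking the union over $\ell\in X_3$ gives $Y_3\subseteq M''\setminus W$.

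Next I would bound $|Y_3|$ from below. Since $|\widetilde{Q}^+(\ell)|=b_\ell=2$ for each $\ell\in X_3$, it suffices to show that the sets $\widetilde{Q}^+(\ell)$ are pairwise disjoint as $\ell$ ranges over $X_3$. This is an immediate consequence of property~(\ref{eqn:in-degree}): every node $j\in M$ has exactly one $\widetilde{A}$-in-neighbor, so $j$ can belong to $\widetilde{Q}^+(\ell)$ for at most one $\ell$. Therefore
\[
|Y_3|=\sum_{\ell\in X_3}|\widetilde{Q}^+(\ell)|=2|X_3|,
\]
which in particular gives $2|X_3|\le|Y_3|$.

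Both steps are essentially one-line consequences of the preceding structural results (Observations~\ref{obs:out-degrees} and~\ref{obs:out-degree-relation} plus equation~(\ref{eqn:in-degree})), so there is no serious obstacle; the only point requiring a moment's care is the explicit verification that every $\ell\in X_3$ lies in $M'$, which is what activates Observation~\ref{obs:out-degree-relation} and thereby places $\widetilde{Q}^+(\ell)$ inside $M''$.
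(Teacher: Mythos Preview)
Your proposal is correct and follows essentially the same approach as the paper: the paper justifies this lemma in the single sentence preceding it by invoking $|\widetilde{Q}^+(\ell)|=2$ and the disjointness of the sets $\widetilde{Q}^+(\cdot)$ from (\ref{eqn:in-degree}), together with Observation~\ref{obs:out-degree-relation} (applied exactly as in Lemma~\ref{lem:X2}). You simply spell out the one point the paper leaves implicit, namely the verification that $\ell\in M'$ via $\delta^+(\ell)\ge b_\ell=2=a_\ell$, which is what makes Observation~\ref{obs:out-degree-relation} applicable.
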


The following lemma follows directly from the definition of $X_4$:
\begin{lemma}
    Let $Y_4:= \bigcup_{\ell\in X_4} \widetilde{Q}^{+}(\ell)\backslash W $.
    Then $Y_4\subseteq M''\backslash W$ and $|X_{4}| \le |Y_{4}|$. For any $j\in Y_{4}$,
    $b_{j}>1$ and $a_{j} > b_{j}+1$, unless $b_{j}=1$. \ $\Box$
\end{lemma}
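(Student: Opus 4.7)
The plan is to unpack the definitions and observe that each ingredient of the lemma is an immediate consequence of earlier machinery, so no new arguments are required.

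First I would note that every $\ell\in X_4$ lies in $\widetilde{M}_2^2\backslash X_3$, so the bipartition identity~(\ref{eqn:bipartition}) forces $|\widetilde{Q}^+(\ell)\backslash W|=1$. Writing this singleton as $\{\ell_1\}$ (consistent with the reserved notation introduced when defining $\widetilde{M}_2^2$), I obtain $Y_4=\{\ell_1:\ell\in X_4\}$. By the very definition of $X_4$, $\ell_1\in Z_1\cup Z_2$, and since both $Z_1$ and $Z_2$ are subsets of $M\backslash W$, this gives $Y_4\subseteq M\backslash W$. The closing clause of the lemma is then pure bookkeeping against the definitions of $Z_1$ and $Z_2$: if $\ell_1\in Z_1$ then $b_{\ell_1}=1$; otherwise $\ell_1\in Z_2$ and we read off $b_{\ell_1}>1$ and $a_{\ell_1}>b_{\ell_1}+1$.

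Next I would verify that $Y_4\subseteq M''$. Each $\ell\in X_4\subseteq M_2^2$ satisfies $a_\ell=b_\ell=2$. Since $b_\ell\le\delta^+(\ell)\le a_\ell$ (the lower bound because $\widetilde{A}\subseteq A$, the upper bound from Observation~\ref{obs:out-degrees}), we deduce $\delta^+(\ell)=a_\ell$, i.e.\ $\ell\in M'$. The arc $(\ell,\ell_1)$ belongs to $\widetilde{A}\subseteq A$, so Observation~\ref{obs:out-degree-relation} immediately yields $\ell_1\in M''$. Combining with the previous paragraph, $Y_4\subseteq M''\backslash W$.

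Finally I would establish $|X_4|\le|Y_4|$ by showing the map $\ell\mapsto\ell_1$ from $X_4$ to $Y_4$ is injective. This is immediate from property~(\ref{eqn:in-degree}): since $(\ell,\ell_1)\in\widetilde{A}$ and $|\widetilde{Q}^-(\ell_1)|=1$, the node $\ell$ is uniquely determined by $\ell_1$. Because the statement is essentially a chain of definition-chasings, I do not anticipate a genuine obstacle; the only minor subtlety is to remember to invoke Observation~\ref{obs:out-degree-relation} (which requires the tail $\ell$ to lie in $M'$) in order to upgrade $Y_4\subseteq M\backslash W$ to $Y_4\subseteq M''\backslash W$.
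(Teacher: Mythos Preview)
Your proposal is correct and is exactly the intended argument: the paper gives no proof beyond the remark that the lemma ``follows directly from the definition of $X_4$'' together with the preceding paragraph (where Observation~\ref{obs:out-degree-relation} was already invoked to get $a_{\ell_1}=\delta^+(\ell_1)+1$, i.e.\ $\ell_1\in M''$), and you have simply made those definition-chases explicit. The injectivity of $\ell\mapsto\ell_1$ via~(\ref{eqn:in-degree}) is the right way to obtain $|X_4|\le|Y_4|$.
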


At this point, we introduce our second additional assumption about $\widetilde{A}$ without
loss of generality.
\begin{assumption}\label{ass:M-2-2}
    Arc set $\widetilde{A}$ is such that it first satisfies Assumption~\ref{ass:mimimum-W0} and
    then minimizes $|M_2^2(\widetilde{A})|$.
\end{assumption}

For any $ \ell\in X_5$, since $\delta^{-}(\ell_1)>1$
according to the definition of $X_5$, there is $j\in
Q^{-}(\ell_1)\backslash\{\ell\}$. Then $j \notin W_0\cup W_1$ (otherwise we would have
$\ell_1\in W $ according to Lemma~\ref{lem:W1}). In fact, node $j$ has the following
property:
\begin{equation}\label{eqn:2nd-special-node-property}
    j\in M_2^1\cap M'\backslash W.
\end{equation}
To see this, consider replacing $(\ell, \ell_1)$ with $(j,\ell_1)$ in $\widetilde{A}$ to
form a new tilde-valid arc set $\widetilde{A}'$. It is easy to see that $\widetilde{A}'$
satisfies Assumption~\ref{ass:mimimum-W0}. However, with the new arc set $\widetilde{A}'$, $\ell$ is no
longer a node in the new $M_2^2(\widetilde{A}')$, which implies that $j$ has to
become a node in $M_2^2(\widetilde{A}')$ in order not to contradict Assumption~\ref{ass:M-2-2} with the
original choice of $\widetilde{A}$, which in turn implies properties
(\ref{eqn:2nd-special-node-property}).
Furthermore, since $j\in M_{2}^{1}$ and $(j,\ell_1) \in A\backslash \widetilde{A}$,
there is no $k \ne \ell_1$ such that $(j,k) \in A\backslash \widetilde{A}$, which
implies that $j \notin Q^{-1}(\widetilde{Q}^{+}(\ell')\backslash)\backslash\{\ell'\}$.
Consequently, we have the following lemma.
\begin{lemma}\label{lem:X5}
Let $Y_{5}:= \bigcup_{\ell \in X_5}{Q}^-(\widetilde{Q}^+(\ell)\backslash W)\backslash\{\ell\}$.
Then $Y_5\subseteq M_2^1\cap M'\backslash W$ and $|X_{5}| \le |Y_{5}|$. \ $\Box$
\end{lemma}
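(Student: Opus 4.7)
The plan is to verify both conclusions of the lemma by assembling the pieces already prepared in the paragraph preceding the lemma statement and then adding a short injectivity argument for the cardinality bound.

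For the containment $Y_5\subseteq M_2^1\cap M'\setminus W$, I would simply invoke the preceding analysis: every element of $Y_5$ arises as some $j\in Q^-(\ell_1)\setminus\{\ell\}$ with $\ell\in X_5$, where $\ell_1$ is the unique element of $\widetilde{Q}^+(\ell)\setminus W$ (a singleton by (\ref{eqn:bipartition})), and the swap argument replacing $(\ell,\ell_1)$ with $(j,\ell_1)$ in $\widetilde{A}$, together with Assumptions~\ref{ass:mimimum-W0} and~\ref{ass:M-2-2}, has already forced $j\in M_2^1\cap M'\setminus W$.

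For the cardinality bound $|X_5|\le|Y_5|$, I would exhibit an injection $f\colon X_5\to Y_5$ by choosing, for each $\ell\in X_5$, any $f(\ell)\in Q^-(\ell_1)\setminus\{\ell\}$; this is possible since $\delta^-(\ell_1)>1$ by the definition of $X_5$. To verify injectivity, suppose $f(\ell)=f(\ell')=j$ for distinct $\ell,\ell'\in X_5$. First, $\ell_1\ne\ell'_1$, since equality would put both $\ell$ and $\ell'$ in $\widetilde{Q}^-(\ell_1)$, contradicting (\ref{eqn:in-degree}). Since $\widetilde{Q}^-(\ell_1)=\{\ell\}$ and $j\ne\ell$, the arc $(j,\ell_1)$ lies in $A\setminus\widetilde{A}$; the same reasoning gives $(j,\ell'_1)\in A\setminus\widetilde{A}$. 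Thus $j$ has at least two distinct outgoing arcs outside $\widetilde{A}$. But $j\in M_2^1\cap M'$ yields $\delta^+(j)=a_j=2$ and $b_j=1$, so exactly one of $j$'s outgoing arcs can lie outside $\widetilde{A}$, a contradiction.

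The only substantive step is the injectivity argument, and within it the crux is to combine three ingredients — the singleton nature of $\widetilde{Q}^-(\cdot)$, the uniqueness of $\ell_1$ as the sole non-$W$ tilde-successor of $\ell$, and the tight out-degree count at a node of $M_2^1\cap M'$ — to force two distinct arcs out of $j$ into $A\setminus\widetilde{A}$ when only one such arc is available. No new combinatorial construction or additional hypothesis on $\widetilde{A}$ is needed beyond those already introduced.
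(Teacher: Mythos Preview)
Your proposal is correct and follows essentially the same approach as the paper. The paper places the entire argument in the paragraph preceding the lemma (the lemma itself carries only a $\Box$), establishing first that any $j\in Q^{-}(\ell_1)\setminus\{\ell\}$ lies in $M_2^1\cap M'\setminus W$ via the swap-and-Assumption~\ref{ass:M-2-2} argument, and then observing that since $j\in M_2^1$ and $(j,\ell_1)\in A\setminus\widetilde{A}$ there is no $k\ne\ell_1$ with $(j,k)\in A\setminus\widetilde{A}$, so the same $j$ cannot arise from a different $\ell'\in X_5$; your injectivity argument is exactly this counting of out-arcs at $j$, stated as an injection rather than as disjointness of the contributing sets.
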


Now let us establish an upper bound on $|X_1|$ in the following lemma with the minimality of our deviation graph $G(\Delta)$.

\begin{lemma}\label{lem:X1}
  Let $Y_1:=W_1$. Then $|X_{1}|\leq|Y_{1}|$.
\end{lemma}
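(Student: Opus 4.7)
The plan is to build an explicit injection $\phi : X_1 \to W_1$, which immediately yields $|X_1|\le|W_1|$. The natural choice is $\phi(i)\in\widetilde{Q}^+(i)\cap W_1$ for $i\in X_1\cap X$ and $\phi(\ell):=\ell_2$ for $\ell\in X_{11}$ (recall $\ell_2$ is the unique element of $\widetilde{Q}^+(\ell)$ lying in $W$, by (\ref{eqn:bipartition})). Injectivity is immediate from the tilde-valid property (\ref{eqn:in-degree}): the sets $\{\widetilde{Q}^+(u):u\in M\}$ are pairwise disjoint, so $u_1\ne u_2$ in $X_1$ forces $\phi(u_1)\ne \phi(u_2)$.

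The real work is to show that $\phi$ actually takes values in $W_1$. A preliminary reduction is easy: for every $u\in X_1$ and every candidate successor $v$ (either $v\in\widetilde{Q}^+(i)$ for $i\in X_1\cap X$, or $v=\ell_2$ for $\ell\in X_{11}$), one has $v\notin\widetilde{W}_1$, because $v\in\widetilde{W}_1$ would force $u=\widetilde{Q}^-(v)\in W_1$ via the bijection in Lemma~\ref{lem:W1}, contradicting $b_u\in\{2,3\}$. Combined with $\widetilde{Q}^+(i)\subseteq W$ for $i\in X_1\cap X$ and $\ell_2\in W$ for $\ell\in X_{11}$, we obtain $\widetilde{Q}^+(i)\subseteq W_0\cup W_1$ respectively $\ell_2\in W_0\cup W_1$. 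So the sole remaining task is to rule out the possibility that the candidate successor lies in $W_0$.

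This exclusion is the main obstacle. I would argue by contradiction using the minimality Assumptions~\ref{ass:mimimum-W0} and~\ref{ass:M-2-2}. If a candidate successor $v\in W_0$, then $b_v=0$ together with $\delta^+(v)\ge 1$ (Lemma~\ref{lem:out-degree}) supplies an arc $(v,k)\in A\setminus\widetilde{A}$; the node $k$ lies in $\widetilde{W}_0\subseteq\widetilde{W}_1$, so $i'':=\widetilde{Q}^-(k)\in W_1$ and has $b_{i''}=1$. I would then consider the compound modification $\widetilde{A}':=\widetilde{A}\setminus\{(i'',k),(u,v)\}\cup\{(v,k),(h,v)\}$ for a well-chosen $h\in Q^-(v)\setminus\{u\}$: the first piece is tilde-valid and preserves $|W_0|$ (in line with Assumption~\ref{ass:mimimum-W0}) while relocating $v$ into the new $W_1'$, and the second piece is engineered to strip $u$ out of $M_2^2$ without promoting any other node into it, strictly decreasing $|M_2^2|$ and thereby contradicting Assumption~\ref{ass:M-2-2}. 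The delicate sub-case is when no such $h$ exists ($\delta^-(v)=1$), which I would handle by invoking Observation~\ref{obs:equiv-min-deviation} and the strong connectivity of the minimal deviation graph (Lemma~\ref{lem:strong-connectivity}) to obtain an equivalent coalitional deviation with a richer predecessor structure at $v$; and the $M_3^3$ portion of $X_1\cap X$ (where $b_i=3$) is treated by iterating the simple swap over the three members of $\widetilde{Q}^+(i)$ until at least one of them is forced into $W_1$.
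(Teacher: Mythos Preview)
Your strategy is genuinely different from the paper's, and the crucial step---excluding $\phi(u)\in W_0$---is not established. The paper does \emph{not} build an injection at all; it never shows that each $u\in X_1$ has a $\widetilde{A}$-successor in $W_1$. Instead it argues by contradiction against the minimality of the deviation graph (Definition~\ref{def:min-deviation}): if $|X_1|>|\widetilde{W}_1|=|W_1|$, take any $X_1'\subsetneq X_1$ with $|X_1'|=|\widetilde{W}_1|$, fix a bijection $X_1'\leftrightarrow\widetilde{W}_1$, and redirect to $X_1'$ every job that under $\Delta$ migrates into $\widetilde{W}_1$. Because every server in $X_1'\subseteq M'$ empties completely and because $Q^+(X_1'\cup K\cup K')\subseteq \widetilde{W}_1\cup K\cup K'$ (using Lemma~\ref{lem:W1} and the definition of $X_{11}$), the resulting $\Delta'$ is a coalitional deviation of a \emph{proper} sub-coalition with $\textrm{IR}(\Delta')\ge\textrm{IR}(\Delta)$, contradicting minimality.

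Your swap-based argument against $v\in W_0$ has several concrete holes. First, the compound modification $\widetilde{A}'=\widetilde{A}\setminus\{(i'',k),(u,v)\}\cup\{(v,k),(h,v)\}$ need not decrease $|M_2^2|$: nothing prevents the ``well-chosen'' $h\in Q^-(v)\setminus\{u\}$ from having $a_h=2$, $b_h=1$, in which case $h$ is promoted into the new $M_2^2$ and the net change is zero; you give no rule for choosing $h$ that avoids this, and no argument that such a choice exists. Second, the $\delta^-(v)=1$ sub-case is not handled: Observation~\ref{obs:equiv-min-deviation} swaps the \emph{incoming} arcs of two nodes $i,j\in M'$, which does nothing to the predecessor set of $v$ unless $v\in\{i,j\}$, and $v\in W_0$ need not lie in $M'$; ``richer predecessor structure at $v$'' is asserted, not demonstrated. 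Third, for $u\in M_3^3$ the single swap only drops $b_u$ from $3$ to $2$, leaving $|M_2^2|$ untouched; your proposed iteration over the three successors would eventually push $b_u$ to $0$, enlarging $W_0$ and destroying the bookkeeping against Assumption~\ref{ass:mimimum-W0}. In short, you have correctly isolated the only obstruction ($\phi(u)\in W_0$), but the argument you sketch to eliminate it does not go through as written, and the paper bypasses this obstruction entirely by working with Definition~\ref{def:min-deviation} rather than Assumptions~\ref{ass:mimimum-W0}--\ref{ass:M-2-2}.
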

\begin{proof}
Suppose to the contrary that $|X_{1}|>|W_{1}|$,
that is, $|X_{1}|> |W_{1}|=|\widetilde{W}_1|$ according to (\ref{eqn:cadinality-of-W1}).
Let $X_{1}' \varsubsetneq X_1$ be a proper subset of
$|\widetilde{W}_1| >0$ elements. Define
\begin{eqnarray*}
  K &:=& W \backslash \widetilde{W}_1 \subseteq W_0\cup W_1, \\
  K' &:=& Q^+(X_{1}')\backslash W, \\
  \Gamma' &:=& {\textstyle \{J_{j}\in\Gamma:\ J_{j}\in\bigcup_{i\in\widetilde{M}}S_{i}\},}
\end{eqnarray*}
where $\widetilde{M}:=X_{1}'\cup K\cup K'$ (see Fig.~\ref{fig:lemma_proof} for an illustration with explanations to follow).
\begin{figure}[h!]
\centering
\includegraphics[scale=0.9,viewport=50mm 215mm 165mm 260mm,clip=true]{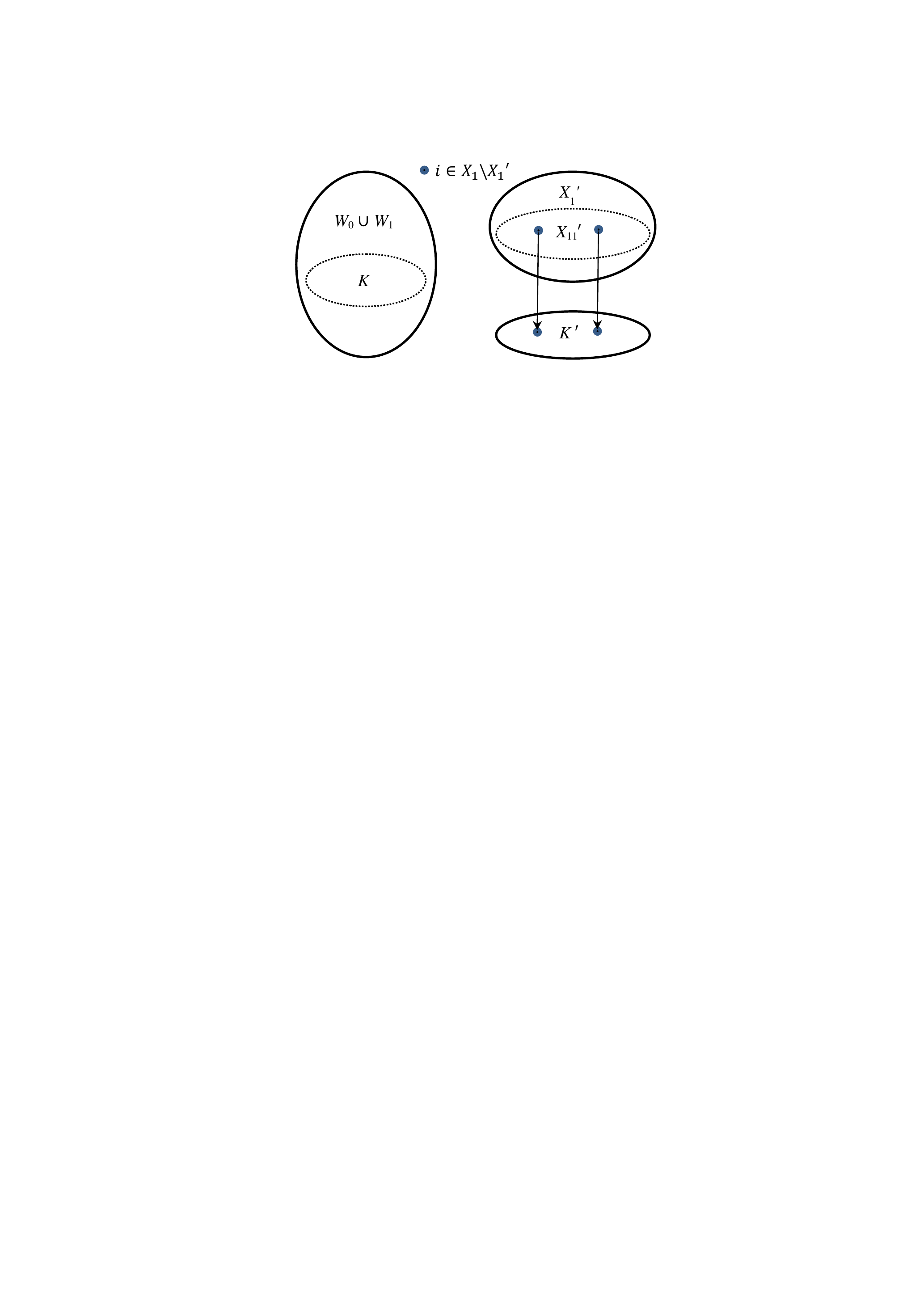}\\ %
  \caption{Proof of Lemma~\ref{lem:X1}}\label{fig:lemma_proof}
\end{figure}

Then $\Gamma'\neq\emptyset$ since $X_1' \neq \emptyset$. We claim $\Gamma'$ is a proper
subset of $\Gamma$. To see this, let $i\in X_{1}\backslash X_1'
\neq \emptyset$. Since $X\cap(W_0\cup W_1)=\emptyset$
(Lemma~\ref{lem:W1}), we have $i\notin K$. Observation~\ref{obs:out-degree-relation}
implies $i\notin K'$. Therefore, we have $i\notin \widetilde{M}$,
i.e., $S_{i}\cap\Gamma'=\emptyset$, but $S_{i}\subseteq\Gamma$. With the same arguments we note that the three constituent subsets of $\widetilde{M}$ are mutually disjoint. In Fig.~\ref{fig:lemma_proof}, the set ${X_{11}}'$ is a subset of $X_{11}$ according to the definition of $X_1$ and the mapping between ${X_{11}}'$ and $K'$ is a one-to-one correspondence due to equation (\ref{eqn:bipartition}).

Since $|X_1'| = |\widetilde{W}_1|$, we can assume there is a one-to-one
correspondence $\phi$ between the nodes (i.e., servers) of the two sets
$X_{1}'$ and $\widetilde{W}_1$. Now let us
define a new coalitional deviation $\Delta'$ with $\Gamma'=\Gamma'(\Delta')$, which is the same
as $\Delta$ restricted on $\Gamma'$ except that, if $J_{j}\in\Gamma'$ migrates
in $\Delta$ to a server of $\widetilde{W}_1$, then let
$J_{j}$ migrate in $\Delta'$ to the corresponding (under $\phi$) server of
$X_{1}'$.

We show that the improvement ratio of any job deviation in $\Delta'$ is at least the
same as that in $\Delta$, which then implies that $\textrm{IR}(\Delta')\geq
\textrm{IR}(\Delta)$, contradicting the minimality of $G=G(\Delta)$ according to
Definition~\ref{def:min-deviation}. To this end, we only need to show that the new
coalitional deviation $\Delta'$ takes place among the servers assigned with jobs
of the coalition $\Gamma'$, that is,
\begin{equation}\label{eqn:grand-inclusion}
    Q^+(\widetilde{M})\subseteq W \cup K' =
    \widetilde{W}_1\cup K\cup K',
\end{equation}
so that benefit of any job deviation will not decrease due to the fact that \emph{all} jobs on
servers of $X_1'$ migrate out in $\Delta$ and hence in $\Delta'$ as well,
leaving empty space for deviational jobs under $\Delta'$, which originally migrate to servers of
$\widetilde{W}_1$ under $\Delta$.

First we have $Q^+(K) \subseteq \widetilde{W}_1$ according to Lemma~\ref{lem:W1}.
On the other hand, it can be easily verified that $Q^{+}(X_1')
\subseteq W \bigcup K'$ according to the definition of $K'$. Now we show
$Q^{+}(K')\subseteq W $, which then
implies (\ref{eqn:grand-inclusion}). In fact, for any $k\in K'$, noticing that
$X_1' \subseteq X_1$, according to the definitions of $K'$ and $X_1$, we have
$k\in Q^+(X_{11})\backslash W$, which implies that $Q^+(k)\in W$ according to the
definition of $X_{11}$.
\end{proof}

\subsection{Part 2}

To prove our final upper bound, we need to introduce the following two structures in graph $G(\Delta)$
with tilde-valid arc set $\widetilde{A}$:
\begin{eqnarray*}
\Omega(\widetilde{A})&:=& \{i\in M':\ i_1 = \widetilde{Q}^{-1}(i)\in\widetilde{Q}^+(i), \\
& &    \ \delta^{-}(i_1)=1,\ \delta^{+}(i_1)=b_{i_1}>1\}; \\
\Pi(\widetilde{A}) &:=& \{(i,i_{1},j): \ i\in\widetilde{M}_{2}^{2}\backslash X_3,
    \ i_1\in \widetilde{Q}^+(\ell)\cap \widetilde{Q}^-(j)\backslash W, \\
& & \ i\ne j, \ \delta^{-}(i_1)=1,\ \delta^{+}(i_1)=b_{i_1}>1, \ j\in M'\}.
\end{eqnarray*}
Note that each element in $\Omega(\widetilde{A})$ represents a directed $2$-cycles of both arcs in
$\widetilde{A}$ and each element in $\Pi(\widetilde{A})$
is a directed 2-path of both arcs in $\widetilde{A}$. In both cases of
$\Omega(\widetilde{A})$ and $\Pi(\widetilde{A})$, the {interior} node $i_1$ has an in-degree
$\delta^-(i_1) = 1$ and all its out-arcs are in $\widetilde{A}$. Our next result is based on
the following further refinement of the tilde-valid arc set $\widetilde{A}$.
\begin{lemma}\label{lem:Omega}
If $\Omega(\widetilde{A})\ne\emptyset$ for some arc set $\widetilde{A}$ satisfying Assumption~\ref{ass:M-2-2},
then there exists an arc set $\widetilde{A}'$ such that, while it also satisfies Assumption~\ref{ass:M-2-2},
additionally, $\Omega(\widetilde{A}')$ is a {proper} subset of $\Omega(\widetilde{A})$.
\end{lemma}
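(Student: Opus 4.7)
\smallskip
\textbf{Proof plan.} Fix $i \in \Omega(\widetilde{A})$ with $i_1 = \widetilde{Q}^-(i)$. By the definition of $\Omega$ we have $(i, i_1), (i_1, i) \in \widetilde{A}$, $\delta^-(i_1) = 1$, $\delta^+(i_1) = b_{i_1} > 1$; and since $i \in M'$, Observation~\ref{obs:out-degree-relation} applied to $(i, i_1)$ yields $i_1 \in M''$. The plan is to build $\widetilde{A}'$ from $\widetilde{A}$ by a local arc-swap that removes $i$ from $\Omega$ while preserving Assumption~\ref{ass:M-2-2} and without creating any new $\Omega$-element. Since $\delta^-(i_1) = 1$ pins $(i, i_1)$ into every tilde-valid arc set, the only single-arc modification that can change the status of $i$ in $\Omega$ is to replace $(i_1, i)$ by some $(j, i) \in A$ with $j \neq i_1$. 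Lemma~\ref{lem:strong-connectivity} together with $\delta^-(i_1) = 1$ forces $\delta^-(i) \geq 2$ (otherwise $Q^-(i) = \{i_1\}$ and $Q^-(i_1) = \{i\}$ would leave $\{i, i_1\}$ unreachable from its nonempty complement, since $m \geq 3$), so such a $j$ always exists. I then set $\widetilde{A}' := (\widetilde{A} \setminus \{(i_1, i)\}) \cup \{(j, i)\}$.

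Next I would verify that $\widetilde{A}'$ still satisfies Assumption~\ref{ass:M-2-2}. Tilde-validity is immediate. For Assumption~\ref{ass:mimimum-W0}, note $b'_{i_1} = b_{i_1} - 1 \geq 1$, and $b_j = 0$ would yield $|W_0(\widetilde{A}')| < |W_0(\widetilde{A})|$ against the minimality of $\widetilde{A}$, so $b_j \geq 1$. For minimality of $|M_2^2|$, the only node whose $M_2^2$-status can change under the swap is $j$, and this would require $(a_j, b_j) = (2, 1)$. A short case check rules this out: $(j, i) \in A \setminus \widetilde{A}$ forces $\delta^+(j) \geq b_j + 1 = 2$, so $a_j = 2$ would put $j$ in $M'$, whence Observation~\ref{obs:out-degree-relation} applied to $(j, i)$ would force $i \in M''$, contradicting $i \in M'$. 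Hence $a_j \geq 3$ and Assumption~\ref{ass:M-2-2} persists.

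It remains to choose $j$ so that $i \notin \Omega(\widetilde{A}')$ and no other node enters $\Omega(\widetilde{A}')$. Walking through the four defining conditions of $\Omega$ and noting that only $\widetilde{Q}^-(i)$, $b_{i_1}$, and $b_j$ change under the swap, one finds that no node outside $\Omega(\widetilde{A})$ can enter $\Omega(\widetilde{A}')$, while $i$ remains in $\Omega(\widetilde{A}')$ (now with $j$ playing the role previously played by $i_1$) precisely when the degenerate configuration $(i, j) \in \widetilde{A}$, $\delta^-(j) = 1$, and $\delta^+(j) = b_j + 1$ holds for \emph{every} $j \in Q^-(i) \setminus \{i_1\}$. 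This is the main obstacle I expect. To dispose of it, I would pair the first swap with a second coordinated arc-swap: pick $v \in \widetilde{Q}^+(j)$ with $\delta^-(v) \geq 2$ and $h \in Q^-(v) \setminus \{j\}$ (whose existence follows by iterating the strong-connectivity argument along the subtree rooted at $j$) and simultaneously replace $(j, v)$ by $(h, v)$ in $\widetilde{A}'$. This cancels the $+1$ contribution of the first swap to $b_j$, so the final value of $b_j$ equals its original value; the equality $\delta^+(j) = b_j$ required for $i$ to remain in $\Omega$ then fails, ejecting $i$. The hardest technical point will be choosing $v, h$ so that the combined modification still satisfies Assumption~\ref{ass:M-2-2} — arguing $(a_h, b_h) \neq (2, 1)$ calls for an analogue of the $a_j \geq 3$ argument above, combined if necessary with the minimality encoded in Assumption~\ref{ass:M-2-2} to rule out residual bad configurations. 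An analogous conditions check then confirms $\Omega(\widetilde{A}'') \subsetneq \Omega(\widetilde{A})$.
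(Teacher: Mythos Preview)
Your approach is the paper's: pick $j \in Q^-(i)\setminus\{i_1\}$ (nonempty by strong connectivity, since otherwise $\{i,i_1\}$ would be unreachable) and replace $(i_1,i)$ by $(j,i)$ in $\widetilde{A}$. Your verification that Assumption~\ref{ass:M-2-2} survives matches the paper's, though the paper argues more directly: $(j,i)\in A$ with $i\in M'$ forces $j\in M''$ (contrapositive of Observation~\ref{obs:out-degree-relation}), which already prevents $j$ from entering either $M_2^2$ or $\Omega$.

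Where you diverge is in handling the possibility that $i$ stays in $\Omega(\widetilde{A}')$ with $j$ now playing $i_1$'s role. You correctly isolate the obstruction --- $(i,j)\in\widetilde{A}$, $\delta^-(j)=1$, $\delta^+(j)=b_j+1$ --- and observe that the real problem arises only if this holds for \emph{every} $j\in Q^-(i)\setminus\{i_1\}$. But that case is impossible, by the very strong-connectivity argument you already invoked once. The first two conditions are equivalent to $Q^-(j)=\{i\}$; if they held for every such $j$, then together with $Q^-(i_1)=\{i\}$ the set $S:=\{i\}\cup Q^-(i)$ would have no in-arcs from $M\setminus S$, forcing $S=M$. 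Then $\delta^+(i_1)\ge 2$ gives an arc $(i_1,v)$ with $v\in Q^-(i)\setminus\{i_1\}$, contradicting $Q^-(v)=\{i\}$. Hence a non-degenerate $j$ always exists and a single swap suffices; your proposed second swap, and the unresolved technicalities you flag around choosing $v$ and $h$, are unnecessary. The paper sweeps all of this under ``it is easily seen'', so your caution is warranted --- but the fix lives one step earlier than where you placed it.
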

\begin{proof}
Assume $i\in\Omega(\widetilde{A})$ and let $i_1= \widetilde{Q}^{-1}(i)$
be as in the definition of $\Omega(\widetilde{A})$. Then there must be a node
$h \in Q^{-}(i)$ with $h\neq i_1$, since otherwise $\delta^{-}(i)=\delta^{-}(i_1)=1$,
which implies that there would be no directed path from any other nodes in $G(\Delta)$ to nodes
$i$ or $i_1$, contradicting Lemma~\ref{lem:strong-connectivity}. Therefore, the following set is
not empty:
\begin{equation}\label{eqn:H}
    H_i:=\{h\in M:\ (h,i)\in A, \textrm{ either } \delta^-(h) > 1 \textrm{ or } (h,i)
\notin\widetilde{A}\}.
\end{equation}
Let $h \in H_i\ne\emptyset$. We define a new tilde-valid arc set
\begin{equation}\label{eqn:one-arc-swap}
    \widetilde{A}':=\{\widetilde{A}\backslash\{(i_{1},i)\}\}\cup\{(h,i)\}.
\end{equation}
It is easily seen that $i \in\Omega(\widetilde{A})\backslash\Omega(\widetilde{A}')$
and $\Omega(\widetilde{A}')\cup\{i\}=\Omega(\widetilde{A})$. On the other hand,
$\widetilde{A}'$ still satisfies Assumption~\ref{ass:mimimum-W0} due to $b_{i_{1}}>1$, and hence
also satisfies Assumption~\ref{ass:M-2-2} since $\delta^{+}(h)<a_{h}$ (which implies that
$h\notin \Omega(\widetilde{A})\cup\Omega(\widetilde{A}')$) according to
Observation~\ref{obs:out-degree-relation} (as no other node not in $M^2_2(\widetilde{A})$
can possibly become a member of $M^2_2(\widetilde{A}')$).
\end{proof}

As a result of Lemma~\ref{lem:Omega}, we can further refine our initial choice of
$\widetilde{A}$ so that it satisfies the following assumption, where the benefit of minimizing
$|\Pi(\widetilde{A})|$ will be seen in the proof of Lemma~\ref{lem:company} (see inequality (\ref{eqn:Pi})).
\begin{assumption}\label{ass:Omega-Pi}
Arc set $\widetilde{A}$ is such that it first satisfies Assumption~\ref{ass:M-2-2}
and then lexicographically minimizes
$(|\Omega(\widetilde{A})|,|\Pi(\widetilde{A})|)$.
\end{assumption}
\begin{corollary}\label{cor:Omega}
    Any arc set $\widetilde{A}$ satisfying Assumption~\ref{ass:Omega-Pi} must satisfy
    $\Omega(\widetilde{A})=\emptyset$. \ $\Box$
\end{corollary}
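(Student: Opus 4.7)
The plan is a one-line contradiction argument powered by Lemma~\ref{lem:Omega}. Assume for contradiction that $\widetilde{A}$ is an arc set satisfying Assumption~\ref{ass:Omega-Pi} with $\Omega(\widetilde{A}) \neq \emptyset$. By definition, Assumption~\ref{ass:Omega-Pi} requires $\widetilde{A}$ first to satisfy Assumption~\ref{ass:M-2-2}, so in particular $\widetilde{A}$ satisfies the hypothesis of Lemma~\ref{lem:Omega}. That lemma then produces an arc set $\widetilde{A}'$ which still satisfies Assumption~\ref{ass:M-2-2} and for which $\Omega(\widetilde{A}') \subsetneq \Omega(\widetilde{A})$, hence $|\Omega(\widetilde{A}')| < |\Omega(\widetilde{A})|$.

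The key observation is that Assumption~\ref{ass:Omega-Pi} asks for lexicographic minimization of the pair $(|\Omega(\widetilde{A})|,|\Pi(\widetilde{A})|)$, with $|\Omega|$ as the primary coordinate. Since $\widetilde{A}'$ strictly beats $\widetilde{A}$ in that primary coordinate while still lying in the same feasible class (arc sets satisfying Assumption~\ref{ass:M-2-2}), we have
\[
\bigl(|\Omega(\widetilde{A}')|, |\Pi(\widetilde{A}')|\bigr) <_{\mathrm{lex}} \bigl(|\Omega(\widetilde{A})|, |\Pi(\widetilde{A})|\bigr),
\]
regardless of how $|\Pi(\widetilde{A}')|$ compares to $|\Pi(\widetilde{A})|$. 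This contradicts the choice of $\widetilde{A}$ as a lexicographic minimizer, so our contrary assumption must be false, yielding $\Omega(\widetilde{A}) = \emptyset$.

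I expect no real obstacle here: the corollary is a purely formal consequence of Lemma~\ref{lem:Omega} and the lexicographic clause of Assumption~\ref{ass:Omega-Pi}. The only point worth flagging is that it is precisely the lexicographic ordering (as opposed to, say, minimizing $|\Omega| + |\Pi|$) that makes the argument automatic, since Lemma~\ref{lem:Omega} controls only $|\Omega|$ and gives no a priori handle on $|\Pi|$.
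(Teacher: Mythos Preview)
Your proof is correct and is exactly the argument the paper has in mind: the corollary is stated with an immediate $\Box$ because it follows directly from Lemma~\ref{lem:Omega} together with the lexicographic clause of Assumption~\ref{ass:Omega-Pi}, just as you spell out.
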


An arc set $\widetilde{A}$ in graph $G(\Delta)$ that satisfies Assumption~\ref{ass:Omega-Pi} is said to be
\emph{derived} from $\Delta$. Without loss of generality, our coalitional deviation
$\Delta$ is considered to have been chosen so that it satisfies the following assumption.

\begin{assumption}\label{ass:Delta}
Coalitional deviation $\Delta$ defining minimal deviation graph $G(\Delta)$ is such that
the arc set $\widetilde{A}$ derived from $\Delta$ gives lexicographical minimum
$V(\Delta):=(|W_{0}(\widetilde{A})|, |M_{2}^{2}(\widetilde{A})|,|\Omega(\widetilde{A})|,
|\Pi(\widetilde{A})|)$.
\end{assumption}

\begin{lemma}\label{lem:company}
Let minimal deviation graph $G(\Delta)$ with $\Delta$ satisfying Assumption~\ref{ass:Delta} be given.
For any $\ell \in X_6$, there is $j\ne \ell$, such that $\widetilde{Q}^{-1}(j)=\widetilde{Q}^{+}(\ell)\backslash W$
(note (\ref{eqn:bipartition})) and $j\in M''\backslash W$.
\end{lemma}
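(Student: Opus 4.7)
Let $\ell \in X_6$ and write $\widetilde{Q}^+(\ell) = \{\ell_1, \ell_2\}$ with $\ell_1 \notin W$, $\ell_2 \in W$. By the definition of $X_6$, $\ell_1 \in Z$, so $b_{\ell_1} \geq 2$, $a_{\ell_1} = b_{\ell_1} + 1$, $\delta^-(\ell_1) = 1$ with $Q^-(\ell_1) = \{\ell\}$, and $Q^+(\ell_1) \nsubseteq W$. Since $\ell \in M_2^2$ forces $\delta^+(\ell) = a_\ell = 2$ and hence $\ell \in M'$, Observation~\ref{obs:out-degree-relation} applied to $(\ell,\ell_1)$ gives $\ell_1 \in M''$. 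Thus $\delta^+(\ell_1) = a_{\ell_1}-1 = b_{\ell_1}$, so $K := \widetilde{Q}^+(\ell_1) = Q^+(\ell_1)$ with $|K| = b_{\ell_1} \geq 2$; the required $j$ must lie in $K \cap (M'' \setminus W) \setminus \{\ell\}$.

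First I show $\ell \notin K$. Otherwise $(\ell_1,\ell) \in \widetilde{A}$ would combine with $(\ell,\ell_1) \in \widetilde{A}$ to place $\ell$ in $\Omega(\widetilde{A})$, via: $\ell \in M'$, $\widetilde{Q}^-(\ell) = \{\ell_1\} \subseteq \widetilde{Q}^+(\ell)$, $\delta^-(\ell_1) = 1$, and $\delta^+(\ell_1) = b_{\ell_1} > 1$; this contradicts Corollary~\ref{cor:Omega}. Hence the goal reduces to showing $K \cap (M'' \setminus W) \ne \emptyset$.

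Suppose for contradiction $K \cap (M'' \setminus W) = \emptyset$. Since $K \nsubseteq W$ by the $X_6$-condition and $M = M' \cup M''$ by Observation~\ref{obs:out-degrees}, there is $k \in K \setminus W$ with $k \in M'$. Then $(\ell,\ell_1,k) \in \Pi(\widetilde{A})$: one checks each defining condition, noting $\widetilde{Q}^-(k) = \{\ell_1\}$ by tilde-validity together with $(\ell_1,k) \in \widetilde{A}$, and $\ell \ne k$ since $\ell \notin K$. To contradict the lex-minimization of $V(\Delta) = (|W_0|, |M_2^2|, |\Omega|, |\Pi|)$ in Assumption~\ref{ass:Delta}, I construct $\widetilde{A}' := (\widetilde{A}\setminus\{(\ell_1,k)\}) \cup \{(h,k)\}$ for some $h \in Q^-(k) \setminus \{\ell_1\}$. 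Under this swap $b_{\ell_1}$ drops by $1$ (remaining $\geq 1$) and $b_h$ rises by $1$; tilde-validity is preserved, and $|W_0(\widetilde{A}')| \leq |W_0(\widetilde{A})|$ automatically (since $\ell_1 \notin W_0$ stays out, and only $h$ can leave $W_0$). A case analysis on the old value of $b_h$ and on whether $a_h = 2$ shows that either $|W_0|$ strictly decreases (giving immediate contradiction with Assumption~\ref{ass:mimimum-W0}), or all of $|W_0|, |M_2^2|, |\Omega|$ are preserved while $|\Pi|$ strictly drops by removing the triple $(\ell,\ell_1,k)$, contradicting Assumption~\ref{ass:Delta}.

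The main obstacle is guaranteeing the swap's existence and verifying its effect on the later invariants. The existence of $h$ requires $\delta^-(k) > 1$; when $\delta^-(k) = 1$, the only in-arc of $k$ is $(\ell_1,k)$ and no direct swap is available, so one must exploit $k \in M'$ (hence all jobs on $k$ migrate, $\delta^+(k) = a_k \geq 2$) combined with strong connectivity (Lemma~\ref{lem:strong-connectivity}) to design a multi-arc modification of $\widetilde{A}$ that still reduces a component of $V(\Delta)$. Verification also requires ruling out that the swap creates new elements of $M_2^2(\widetilde{A}')$, $\Omega(\widetilde{A}')$, or $\Pi(\widetilde{A}')$, by tracking the altered $b$-values at $\ell_1$ and $h$ and using Observation~\ref{obs:out-degree-relation} to constrain the $M'/M''$-membership of the affected nodes; this routine but intricate case-splitting is the most delicate part of the proof.
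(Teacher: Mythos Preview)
Your first two paragraphs match the paper: you correctly identify $\ell_1$, establish $\widetilde{Q}^+(\ell_1)=Q^+(\ell_1)$, pick $j\in\widetilde{Q}^+(\ell_1)\setminus W$, and rule out $j=\ell$ via Corollary~\ref{cor:Omega}. The divergence, and the gap, is in how you derive the contradiction from the hypothesis $j\in M'$.

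You attempt to stay inside the same deviation $\Delta$ and modify only $\widetilde{A}$ by an arc swap $(\ell_1,k)\rightsquigarrow(h,k)$. You yourself identify the obstruction: when $\delta^-(k)=1$ there is no $h\in Q^-(k)\setminus\{\ell_1\}$, and you leave this case to an unspecified ``multi-arc modification''. This is a genuine hole, not routine bookkeeping. With $\delta^-(\ell_1)=\delta^-(k)=1$ the pair $\{\ell_1,k\}$ is reachable only through $\ell$, and there is no local rewiring of $\widetilde{A}$ that decreases a coordinate of $(|W_0|,|M_2^2|,|\Omega|,|\Pi|)$ without risking an increase elsewhere; in particular your suggested use of strong connectivity and $k\in M'$ does not lead to a concrete tilde-valid $\widetilde{A}'$. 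Even in the case $\delta^-(k)>1$, your claim that $|\Pi|$ strictly drops is not justified: removing $(\ell_1,k)$ does kill the triple $(\ell,\ell_1,k)$, but raising $b_h$ may turn $h$ into a valid middle node (when $a_h-1=b_h+1$ and $\delta^-(h)=1$), creating new $\Pi$-triples you have not accounted for.

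The paper's proof avoids all of this by changing $\Delta$ rather than $\widetilde{A}$. Since both $\ell$ and the hypothetical $j$ lie in $M'$, Observation~\ref{obs:equiv-min-deviation} applies: one builds $\Delta'$ by redirecting every job that migrates into $\ell$ (resp.~$j$) under $\Delta$ to $j$ (resp.~$\ell$) instead. This yields a minimal $G(\Delta')$ with the same $\mathrm{IR}$, and the induced $\widetilde{A}'$ automatically has $|W_0(\widetilde{A}')|=|W_0(\widetilde{A})|$ and $|M_2^2(\widetilde{A}')|=|M_2^2(\widetilde{A})|$; any new $\Omega$-elements at $\ell$ or $j$ are removed by the single-arc swap of Lemma~\ref{lem:Omega}, and a short case check shows $(\ell,\ell_1,j)\in\Pi(\widetilde{A})\setminus\Pi(\widetilde{A}')$ with no compensating triple, contradicting Assumption~\ref{ass:Delta}. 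The point is that swapping \emph{in-neighbourhoods} of two $M'$-nodes is a symmetric operation that needs no hypothesis like $\delta^-(k)>1$; this use of Observation~\ref{obs:equiv-min-deviation} is the idea your argument is missing.
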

\begin{proof}
Given $\ell \in X_6$ and $\ell_1 = Q^+(\ell)\backslash W$. Since $a_{\ell_1}=b_{\ell_1}+1$
according to the definition of $X_6$, we have $\delta^+(\ell_1)=b_{\ell_1}$ and hence
$\widetilde{Q}^+(\ell_1)=Q^+(\ell_1)$ since $a_{\ell_1}=\delta^+(\ell_1)+1$ according to
Observation~\ref{obs:out-degree-relation}. Since $b_{\ell_1} > 1$ and
$\widetilde{Q}^+(\ell_1)=Q^+(\ell_1)\not\subseteq W$ (again according to the
definition of $X_6$), we let $j \in \widetilde{Q}^+(\ell_1)\backslash W$.
Then $j\ne \ell$ since otherwise we would
have $\ell \in \Omega(\widetilde{A})$, contracting Corollary~\ref{cor:Omega} with
our Assumption~\ref{ass:Omega-Pi} (see Fig.~\ref{fig:company} for an illustration with more explanations to follow).

\begin{figure}
\centering
\includegraphics[scale=0.95,viewport=40mm 210mm 165mm 255mm,clip=true]{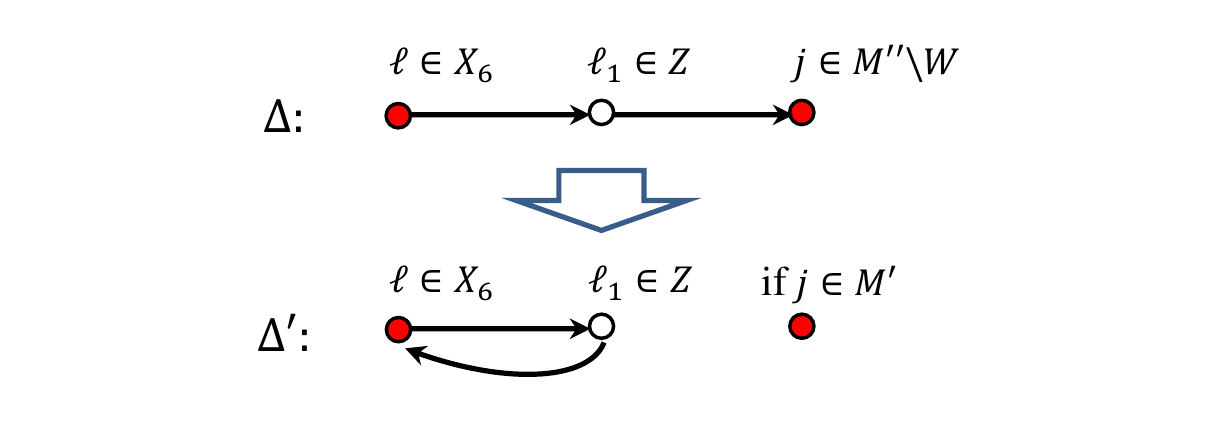}\\ %
  \caption{A Company Node $j$ of $\ell\in X_6$ or New Coalitional Deviation $\Delta'$}\label{fig:company}
\end{figure}

We claim $j\in M''$ and hence are done.
Let us assume for a contradiction that $j\in M'$. Note that
with $\{\ell,\ell_{1},j\}$ replacing $\{i,i_{1},j\}$ in the definition of $\Pi(\widetilde{A})$,
we conclude that $\ell\in \Pi(\widetilde{A})$. Now let us define a new
coalitional deviation $\Delta'$ so that its derived arc set $\widetilde{A}'$ gives a
$\mu(\Delta'):=(|W_{0}(\widetilde{A}')|, |M_{2}^{2}(\widetilde{A}')|, |\Omega(\widetilde{A}')|,
|\Pi(\widetilde{A}')|)$ that is lexicographically smaller than
$\mu(\Delta):=(|W_{0}(\widetilde{A})|,|M_{2}^{2}(\widetilde{A})|,|\Omega(\widetilde{A})|,
|\Pi(\widetilde{A})|)$, a desired contraction to Assumption~\ref{ass:Delta}.

In fact, let $\Delta'$ be defined as in Observation~\ref{obs:equiv-min-deviation}
after node $i$ has been replaced by $\ell$ in the statement of
Observation~\ref{obs:equiv-min-deviation}. Denote $A'$ as the arc set of the resulting
minimal deviation graph $G(\Delta')$. Let $\widetilde{A}'$ be the natural
result of $\widetilde{A}$ after the re-orientation from $\Delta$ and $\Delta'$, i.e.,
an arc in $\widetilde{A}$ pointing to $\ell$ (resp.~$j$) will become an arc in
$\widetilde{A}'$ pointing to $j$ (resp.~$\ell$). Other arcs are the same for
$\widetilde{A}$ and $\widetilde{A}'$. Apparently,
\[
|W_{0}(\widetilde{A}')|=|W_{0}(\widetilde{A})|,\
|M_{2}^{2}(\widetilde{A}')|=|M_{2}^{2}(\widetilde{A})|.
\]
On the other hand, if the value of $|\Omega(\widetilde{A}')|$ has increased from
$|\Omega(\widetilde{A})|$, then clearly it must be the result of $\ell$ and/or $j$
becoming element(s) of $\Omega(\widetilde{A}')$. In any such case (say, the former
case for the sake of argument), based on the definition of $\Omega(\widetilde{A}')$, we
can use the approach in Lemma~\ref{lem:Omega} to find $h\in H_{\ell}$ as defined in
(\ref{eqn:H}) and perform an arc-swap as in
(\ref{eqn:one-arc-swap}) with $i$ and $i_1$ replaced by $\ell$ and $\ell_1$,
respectively, to reduce $|\Omega(\widetilde{A}')|$ while maintaining
the values of $|W_{0}(\widetilde{A}')|$ and $|M_{2}^{2}(\widetilde{A}')|$. For
convenience, we still use $\widetilde{A}'$ to denote the tilde-valid arc set after such
arc-swap(s) if needed. Consequently, we have
\[
\Omega(\widetilde{A}') = \Omega(\widetilde{A}) = \emptyset.
\]
However, we claim
\begin{equation}\label{eqn:Pi}
    |\Pi(\widetilde{A}')|<|\Pi(\widetilde{A})|,
\end{equation}
a desired contradiction. To see inequality (\ref{eqn:Pi}), we first note that
(i) any 2-path in $\Pi(\widetilde{A})$ starting
at $i\ne\ell, j$ is also a 2-path in $\Pi(\widetilde{A}')$, and vice versa, and (ii) any 2-path
in $\Pi(\widetilde{A})$ (resp.~$\Pi(\widetilde{A}')$) starting at $\ell$ (resp.~$j$) must
have the first arc $(\ell, \ell_1)$ (resp.~$(j, \widetilde{Q}^+(j)\backslash W)$,
since $|\widetilde{Q}^+(j)\backslash W| =1$ due to $j\in \widetilde{M}_2^2\backslash X_3$ according to
(\ref{eqn:bipartition})).
On the other hand, the following can be easily observed:
\begin{enumerate}
  \item If $(\ell, \ell_1, j')\in \Pi(\widetilde{A})$ ($j'\ne j$), then
        $(\ell, \ell_1, j')\in \Pi(\widetilde{A}')$, and vice versa.
  \item If $(j, j_1, j')\in \Pi(\widetilde{A})$ ($j'\ne \ell$), then
        $(j, j_1, j')\in \Pi(\widetilde{A}')$, and vice versa.
  \item $(\ell, \ell_1, j)\in \Pi(\widetilde{A})\backslash \Pi(\widetilde{A}')$,
        since $(\ell, \ell_1, j)\in \Pi(\widetilde{A}')$ would imply
        $(\ell_1, j)\in\widetilde{A}'\subseteq A'$ by definition of $\Pi(\widetilde{A}')$
        and hence $(\ell_1, \ell)\in A$ by definition of $A'$, which in turn
        implies that $(\ell_1, \ell)\in \widetilde{A}$ since $b_{\ell_1}=\delta^+(\ell_1)$
        under $\widetilde{A}$. Consequently, we obtain
        $\ell\in \Omega(\widetilde{A})$, contradicting
        Corollary~\ref{cor:Omega}.
  \item With similar reasons for $(\ell, \ell_1, j)\not\in\Pi(\widetilde{A}')$,
        we have $(j, j_1, \ell) \not\in \Pi(\widetilde{A}')$.
\end{enumerate}
Therefore, overall $\Pi(\widetilde{A}')$ contains at least one element less than $\Pi(\widetilde{A})$
as indicated in points 3 and 4 above.
\end{proof}

We call $j\in M''\backslash W$ identified in the above lemma a \emph{company} of $\ell\in X_6$. Clearly, any $j\in M''\backslash W$ cannot be a company of two different elements of $X_6$ according to the statement of the lemma, which leads us to the following corollary.
\begin{corollary}\label{cor:X6}
Denote $X[\ell]:= \{j\in M\backslash X:\ \textrm{node $j$ is a company of $\ell$}\}$ for any $\ell\in X_6$ and let
$Y_{6}:=\bigcup_{\ell\in X_6}((\widetilde{Q}^{+}(\ell)\backslash{W})\cup X[\ell])$.
Then $Y_6\subseteq M''\backslash W$ and $2|X_{6}|\leq|Y_{6}|$. \ $\Box$
\end{corollary}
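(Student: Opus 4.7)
The plan is to unpack the definitions and establish two items: (i) $Y_6\subseteq M''\backslash W$ and (ii) $|Y_6|\ge 2|X_6|$. For (i), the companies $X[\ell]$ already lie in $M''\backslash W$ by Lemma~\ref{lem:company}. For each $\ell\in X_6\subseteq M_2^2$ we have $a_\ell=b_\ell=2$; since $b_\ell\le\delta^+(\ell)\le a_\ell$, this forces $\delta^+(\ell)=a_\ell$, so $\ell\in M'$. Observation~\ref{obs:out-degree-relation} then places every successor of $\ell$, in particular $\ell_1:=\widetilde{Q}^+(\ell)\setminus W$ (a single element by~(\ref{eqn:bipartition})), in $M''$; by construction $\ell_1\not\in W$. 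This gives $Y_6\subseteq M''\backslash W$.

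For (ii), I would show that the sets $T_\ell := \{\ell_1\}\cup X[\ell]$ ($\ell\in X_6$) are pairwise disjoint and each has at least two elements. Lemma~\ref{lem:company} provides at least one company $j\in X[\ell]$, and $\ell_1\notin X[\ell]$ since any company $j$ satisfies $\widetilde{Q}^-(j)=\{\ell_1\}$ while $\widetilde{Q}^-(\ell_1)=\{\ell\}\ne\{\ell_1\}$; hence $|T_\ell|\ge 2$. For distinct $\ell,\ell'\in X_6$, the elements $\ell_1$ and $\ell'_1$ differ because each has a unique predecessor in $\widetilde{A}$ (namely $\ell$ and $\ell'$ respectively, by~(\ref{eqn:in-degree})), and a single $j$ cannot be a company of both $\ell$ and $\ell'$ by the same uniqueness of $\widetilde{Q}^-(j)$.

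The only slightly delicate step is ruling out that $\ell_1$ could equal some company $j'$ of a different $\ell'\in X_6$. If this happened, then $\widetilde{Q}^-(\ell_1)=\{\ell\}$ on one hand and $\widetilde{Q}^-(j')=\{\ell'_1\}$ on the other, forcing $\ell=\ell'_1$; but $\ell\in M'$ by the argument of the first paragraph while $\ell'_1\in M''$ by the same argument applied to $\ell'$, which is a contradiction. Combining disjointness with $|T_\ell|\ge 2$ yields $|Y_6|=\sum_{\ell\in X_6}|T_\ell|\ge 2|X_6|$, completing the proof. The main obstacle is really absorbed into Lemma~\ref{lem:company}; what remains is this short bookkeeping argument that exploits the $M'$ vs.~$M''$ dichotomy to keep the pieces contributed by different $\ell$'s from colliding.
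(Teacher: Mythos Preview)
Your proof is correct and follows essentially the same line as the paper: the paper's justification is the single sentence preceding the corollary (that no $j$ can be a company of two different elements of $X_6$, by the uniqueness of $\widetilde{Q}^-(j)$), and leaves the remaining checks implicit. You spell out these checks explicitly---in particular the $M'$/$M''$ argument that rules out $\ell_1$ coinciding with a company of a different $\ell'$---which the paper does not bother to state but which is indeed needed to conclude full disjointness of the sets $T_\ell$.
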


We have used the cardinalities of the six sets $Y_1, \dots, Y_6$ to bound $|X_1|$, $|X_2|$, $2|X_3|$, $|X_4|$, $|X_5|$ and $2|X_6|$, respectively. Let us make sure these sets do not overlap with $X$ and are mutually disjoint. According to Lemmas~\ref{lem:X2}--\ref{lem:X5} and Corollary~\ref{cor:X6}, we have
\begin{align*}
&   Y_5\subseteq M_2^1\cap M'\backslash W\subseteq M\backslash X;\\
&   Y_2, Y_3, Y_4, Y_6 \subseteq M''\backslash W\subseteq M\backslash X;
\end{align*}
and
\begin{align*}
&   Y_t\subseteq \widetilde{Q}^+(X_t), t=2, 3, 4; \\
&   Y_6\subseteq \widetilde{Q}^+(X_6)\cup\widetilde{Q}^+(M\backslash X).
\end{align*}
Hence $W\cap X = \emptyset$ and $W\cap Y_t = \emptyset$ ($t=2, \ldots, 6$). Since $\widetilde{Q}^+(i)\cap \widetilde{Q}^+(j)=\emptyset$ for any $i\neq j$ (definition of $\widetilde{A}$) and $\widetilde{Q}^+(i)\cap M'=\emptyset$ for any $i$ (Observation~\ref{obs:out-degree-relation}),
noticing that $Y_1=W_1\subseteq W\backslash X$ (Lemma~\ref{lem:W1}), we conclude that
\begin{equation}\label{eqn:disjoint-Y}
    Y_t\cap X = \emptyset \textrm{ and }  Y_t\cap Y_s = \emptyset, \ s \ne t,
    \ s,t     \in \{1, \ldots, 6\}.
\end{equation}

\section{Establishment of Strong Stability}

Now we are ready to go back to proving (\ref{eqn:ultimate}) and hence Theorem~\ref{thm:main}.
Since $X_2\subseteq M_3^3$, the left-hand side of inequality (\ref{eqn:ultimate}) is at most
\begin{equation}\label{eqn:LHS}
    |X|-\frac12|M_3^3| \le \sum_{t=1}^6 |X_t|-\frac12|X_2|.
\end{equation}
On the other hand, if we let
\[
Y_t':=\{i\in Y_t:\ b_i=1\} \ \textrm{ and } \
Y_t'':=Y_t\backslash Y_t', \textrm{ for } t=2,3,4,6,
\]
which imply
\begin{align*}
    & Y_4''=\{i\in Y_4:\ b_i\ge 2, a_i\ge b_i+2\}
              \subseteq\bigcup_{2\le b\le a-2} M_a^b,  \\
    &\bigcup_{t\in\{2,3,4,6\}}Y_t'\cup Y_1\cup Y_5
              \subseteq \bigcup_{a\ge 2} M_a^1, \\
    &\bigcup_{t\in\{2,3,6\}}Y_t''
              \subseteq\bigcup_{2\le b < a} M_a^b,
\end{align*}
then noticing the properties (\ref{eqn:disjoint-Y}) and that
\begin{equation*}
    b-\frac{4b-5}{a-1} \ge \left\{
    \begin{array}{ll}
      \frac12, & \textrm{if } 2\le b <a; \\
      1, & \textrm{if } 2\le b \le a-2,
    \end{array}
\right.
\end{equation*}
we see that the right-hand side of inequality (\ref{eqn:ultimate}) is at least
\begin{align}
    |Y_1|+|Y_5| & +\sum_{t\in\{2,3,4,6\}} |Y_t'|  +
                  \frac12\sum_{t\in\{2,3,6\}} |Y_t''| + |Y_4''| \nonumber \\
                & \ge |Y_{1}|+\frac{1}{2}|Y_{2}|+\frac{1}{2}|Y_{3}|+
                  |Y_{4}|+|Y_{5}|+\frac{1}{2}|Y_{6}|. \label{eqn:middle}
\end{align}
According to Lemmas~\ref{lem:X2}--\ref{lem:X1} and Corollary~\ref{cor:X6}, the right-hand side of inequality (\ref{eqn:LHS}) is at most that of (\ref{eqn:middle}), which
in turn ultimately leads to inequality (\ref{eqn:ultimate}). Consequently, Theorem~\ref{thm:main} is established.

\medskip

From Theorem~\ref{thm:main} and the lower bound demonstrated in Section~\ref{sec:lower_bound},
the following theorem follows.
\begin{theorem}
In the m-server load balancing game ($m\geq3$), any NE is a
$({5}/{4})$-approximate SNE and the bound is tight.
\end{theorem}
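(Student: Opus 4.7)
The plan is to reduce the claim for an arbitrary coalitional deviation to Theorem~\ref{thm:main}, and then invoke the example in Section~\ref{sec:lower_bound} for tightness. Most of the heavy combinatorial work is already done in Theorem~\ref{thm:main}, so the remaining step is essentially a reduction together with a tightness argument.

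First I would observe that it suffices to bound $\textrm{IR}(\Delta)$ when $\Delta$ is a minimal coalitional deviation. Indeed, for any coalitional deviation $\Delta$ of an NE $S$, if $G(\Delta)$ is not minimal then, by the contrapositive of Definition~\ref{def:min-deviation}, there exists a coalitional deviation $\Delta'$ whose coalition $\Gamma(\Delta')$ is a proper subset of $\Gamma(\Delta)$ and satisfies $\textrm{IR}(\Delta')\ge \textrm{IR}(\Delta)$. Since $\Gamma(\Delta)$ is finite, iterating this shrinking process terminates at a minimal deviation $\Delta^{\ast}$ with $\textrm{IR}(\Delta^{\ast})\ge \textrm{IR}(\Delta)$. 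Hence an upper bound on IR over minimal deviations carries over to all coalitional deviations.

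Next I would handle the number of participating servers. Since completion times and IRs depend only on workloads of the servers whose job sets actually change, restricting attention to the $k$ participating servers yields an induced $k$-server subgame in which $\Delta^{\ast}$ is a minimal deviation with $V(G(\Delta^{\ast}))=M$. If $k=2$, then NEs and SNEs coincide as recalled in the introduction, so $\textrm{IR}(\Delta^{\ast})=1<5/4$. If $k\ge 3$, Theorem~\ref{thm:main} directly yields $\textrm{IR}(\Delta^{\ast})\le 5/4$. Combined with the previous paragraph, every NE of the $m$-server load balancing game is a $(5/4)$-approximate SNE.

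For tightness, I would point to the 3-server instance of Figure~\ref{fig:lower_bound}, which exhibits an NE together with a coordinated deviation in which each participating job attains IR exactly $5/4$. For $m>3$, the same witness is obtained by appending $m-3$ auxiliary servers, each carrying a single job of length equal to the maximum workload $10$ in the example, so that these extra servers neither participate in any profitable deviation nor destroy the NE property on the three active servers. I do not anticipate any real obstacle: the only delicate point is the monotonicity step from arbitrary to minimal deviations, but since minimality is phrased as a strict IR decrease on every proper sub-coalition, its negation provides exactly the non-strict increase needed to drive the iterative reduction to termination.
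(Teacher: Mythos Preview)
Your proposal is correct and follows exactly the paper's approach: the paper's own proof of this theorem is a single sentence combining Theorem~\ref{thm:main} with the lower-bound instance of Section~\ref{sec:lower_bound}, and you have simply spelled out the details (reduction from arbitrary to minimal deviations, restriction to the subgame on the participating servers, and padding the three-server example to general $m$) that the paper leaves implicit. One minor remark: your case $k=2$ is in fact vacuous rather than yielding $\textrm{IR}(\Delta^{\ast})=1$, since total workload on two participating servers is conserved while both post-deviation workloads would have to strictly decrease, so no such $\Delta^{\ast}$ exists.
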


\section{Concluding Remarks}

By establishing a tight bound of $5/4$ for the approximation of general NEs to SNEs in the
$m$-server load balancing game for $m\geq3$, we have closed the final gap for the study of
approximation of general NEs to SNEs. However, as demonstrated by Feldman \&
Tamir \cite{FeTa09} and by Chen \cite{Chen09}, a special subset of NEs known as LPT
assignments, which can be easily identified as NEs \cite{FoKoMaSp02},
do approximate SNEs better than general NEs. It is still a challenge
to provide a tight approximation bound for this subset of NEs.

\section*{Acknowledgement}
Research by the second and third author was partially supported by the National Natural Science
Foundation of China (Grant No.~11071142) and the Natural Science
Foundation of Shandong Province, China (Grant No.~ZR2010AM034).

\end{document}